\documentclass[letterpaper, 10 pt, conference]{ieeeconf}  

\IEEEoverridecommandlockouts                              

\usepackage{graphicx}
\usepackage{subcaption}
\usepackage{comment}
\usepackage{epsfig} 
\usepackage{times} 
\usepackage{amsmath} 
\usepackage{amssymb}  
\usepackage{xcolor}
\usepackage{flushend}
\usepackage{booktabs}
\usepackage{cleveref}
\usepackage{multirow}
\usepackage{mathtools}

\newtheorem{mydef}{\bf Definition}
\newtheorem{mythm}{\bf Theorem}

\newtheorem{mylem}{\bf Lemma}
\newtheorem{mycol}{\bf Corollary}

\newtheorem{myass}{\bf Assumption}
\newtheorem{remark}{Remark}
\usepackage{algpseudocode}
\usepackage{algorithmicx,algorithm}

\newcounter{WSQcomment}

\title{\LARGE \bf
Information-Theoretic Adaptive Cooling for Deterministic MPPI via Entropy Feedback
}

\author{Shuqi Wang, Wenrong Sun, Tao Han, Yue Gao and Xiang Yin
\thanks{This work was supported by the National Natural Science 
Foundation of China (62173226, 62061136004).}
\thanks{S. Wang, T. Han, Y. Gao and X. Yin are with 
the School of Automation \& Intelligent Sensing, Shanghai Jiao Tong University, 
Shanghai 200240, China. 
e-mail: \{wangshuqi, coolhantao, yuegao, yinxiang\}@sjtu.edu.cn.  W. Sun is with the Department of Physics, 
The Hong Kong University of Science and Technology, Hong Kong S.A.R., China. 
e-mail: wsunam@connect.ust.hk}
}

\begin{document}

\maketitle
\thispagestyle{empty}
\pagestyle{empty}

\begin{abstract}
This paper investigates deterministic optimal control using Model Predictive Path Integral (MPPI) control, 
a sampling-based and derivative-free framework well suited for systems with complex dynamics and nonsmooth objectives. 
In deterministic MPPI, the temperature must be driven to zero to recover the true optimum, yet the design of an effective cooling schedule remains a fundamental challenge. Existing methods typically rely on predefined open-loop schedules, which limit the efficiency and robustness of the algorithm.
To overcome this limitation, we propose an Information-Theoretic Adaptive Cooling (ITAC) framework that uses the Shannon entropy of the importance weights as an online feedback signal to regulate the temperature. The proposed mechanism adapts the cooling rate to the current sampling state, enabling fast progress when the weights are diffuse and cautious cooling when they become concentrated. We prove asymptotic convergence of the resulting scheme to the deterministic optimum, and further derive a critical entropy threshold that leads to a smooth barrier against premature weight collapse. Experiments on nonsmooth signal temporal logic motion-planning tasks show that ITAC improves sampling efficiency and achieves substantially faster convergence than state-of-the-art baselines without sacrificing the derivative-free nature of MPPI.
\end{abstract}

\section{Introduction}
Model Predictive Path Integral (MPPI) control is a powerful sampling-based framework for solving complex optimal control problems \cite{williams2017model}. 
Rooted in path-integral stochastic optimal control theory \cite{kappen2005linear,theodorou2010generalized}, MPPI computes an optimal control distribution by minimizing the KL-divergence relative to a base distribution, thereby reformulating control as an expectation computation \cite{williamsInformationTheoreticModel2017a}. 
Its main advantages are that it is derivative-free and highly parallelizable: with GPU-accelerated trajectory sampling, MPPI can evaluate thousands of candidate trajectories in real time, making it effective for high-dimensional systems with complex dynamics and constraints \cite{williams2017model}. 
These strengths have enabled successful applications in autonomous driving \cite{williamsInformationTheoreticModel2017a}, agile quadrotor \cite{yi2024covompc}, robotic manipulation \cite{pezzato2025sampling}, and control under formal temporal-logic specifications \cite{zheng2026signal,halder2025trajectoryplanningsignaltemporal}; 
see \cite{kazim2024recent} for a recent survey.

While MPPI was originally developed for stochastic optimal control, recent studies have increasingly explored its use in deterministic systems. 
For example, \cite{homburger2025optimality} provided a theoretical justification for applying MPPI to deterministic optimal control by showing that, as the sampling distribution is progressively concentrated through joint annealing of the exploration covariance and the temperature, the MPPI solution approaches the true deterministic optimum.
Building on this perspective, \cite{halder2025trajectoryplanningsignaltemporal} further extended deterministic path integral optimization to trajectory planning under signal temporal logic specifications. 
More  recently,   \cite{fazlyab2026preconditioned} offered a variational optimization-theoretic perspective on MPPI, showing that the classical method can be interpreted as a preconditioned gradient descent step on a free-energy objective. 
 
Since MPPI relies on iterative sampling, its practical performance is inherently tied to sampling efficiency. 
A key factor governing this process is the \emph{temperature parameter} $\lambda$. 
In particular, to eliminate the intrinsic bias of MPPI and ensure convergence to the true deterministic optimum, the temperature must be driven to zero \cite{homburger2025optimality}. 
However, this requirement gives rise to a fundamental \emph{cooling-rate dilemma}. 
On the one hand, reducing $\lambda$ too aggressively causes the importance weights to become extremely skewed. As a result, the effective sample size (ESS) rapidly collapses, leading to a sharp increase in Monte Carlo estimation error. 
On the other hand, cooling too conservatively preserves sampling diversity but slows down convergence toward the optimum. 
In the language of nonequilibrium thermodynamics, the former phenomenon, where the system freezes prematurely into a suboptimal state before adequately exploring the solution space, is often referred to as \emph{glassification} \cite{kirkpatrick1983optimization}. 
In the sequential Monte Carlo literature, it is more commonly described as \emph{weight degeneracy/collapse} \cite{kappen2016adaptive}.

To mitigate the concentration of importance weights, recent advances have mainly followed two directions.  One line of work improves the sampling distribution itself, for example through adaptive importance sampling, ancillary controllers, or covariance shaping based on local quadratic approximations  \cite{asmar2023model,trevisan2024biased,schramm2026variancereduced,yi2024covompc}. Although effective at maintaining a high ESS, these methods often rely on gradient or Hessian information, thereby weakening MPPI's key advantage as a derivative-free method for handling highly nonsmooth dynamics and cost landscapes. Another line of work focuses on adaptive temperature tuning \cite{watson2023inferring}. While this approach highlights the practical importance of adaptive cooling, it does not guarantee that the temperature is driven to zero and treats the update rule as an inference-based heuristic without a principled connection to MPPI's underlying variational structure.

In this paper, we study MPPI for deterministic optimal control and develop a principled adaptive cooling mechanism for its sampling process. 
Unlike existing approaches that rely on predefined open-loop cooling schedules, the proposed Information-Theoretic Adaptive Cooling (ITAC) framework uses the real-time Shannon entropy of the importance weights as a feedback signal to regulate the temperature online according to the current sampling state. 
We prove that this mechanism guarantees asymptotic convergence to the deterministic optimum. 
Moreover, by deriving a critical entropy threshold as a necessary condition for Monte Carlo error bounds, we establish a theoretical barrier against premature glassification.  
Numerical experiments on nonsmooth signal temporal logic tasks further show that ITAC accelerates convergence by up to 73\% compared with state-of-the-art baselines, while preserving the derivative-free nature.

\section{Preliminaries}\label{sec:prelim}
In this section, we briefly review the standard MPPI method for deterministic optimal control. 
Consider the discrete-time deterministic control system of form
\begin{equation}\label{eq:dynamics}
    \mathbf{x}_{k+1} = f(\mathbf{x}_k, \mathbf{u}_k),
\end{equation}
where $\mathbf{x}_k \in \mathbb{R}^n$ and $\mathbf{u}_k \in \mathbb{R}^d$ denote the state and control input at time step $k$, respectively. 
For a given initial state $\mathbf{x}_0$ and control sequence 
$\mathbf{U}= (\mathbf{u}_0,\cdots,\mathbf{u}_{K-1})$, 
the resulting trajectory $\mathbf{X}=(\mathbf{x}_0,\cdots,\mathbf{x}_{K})$ is uniquely determined by \eqref{eq:dynamics}.

Let $\phi:\mathbb{R}^n\to \mathbb{R}$ be the terminal cost and 
$\mathcal{L}: \mathbb{R}^n\times \mathbb{R}^d\to \mathbb{R}$ be the stage cost. 
Given a planning horizon  $K$, our objective is to find an optimal control sequence 
$\mathbf{U}^*_{\mathrm{det}} = (\mathbf{u}_0^*,\dots,\mathbf{u}_{K-1}^*) \in \mathbb{R}^{d\times K}$ 
that minimizes the total cost, i.e.,
\begin{equation}\label{eq:cost}
    \mathbf{U}^*_{\mathrm{det}}
    =
    \arg\min_{\mathbf{U}}
    \left(
    \phi(\mathbf{x}_K) + \sum_{k=0}^{K-1} \mathcal{L}(\mathbf{x}_k,\mathbf{u}_k)
    \right).
\end{equation}
Furthermore, we assume that the running cost takes the form\vspace{-3pt}
\[
\mathcal{L}(\mathbf{x}_k, \mathbf{u}_k)
=
c(\mathbf{x}_k)
+\frac{1}{2}
\left(
\mathbf{u}_k^{\top}R\mathbf{u}_k
\right),
\]
where $c:\mathbb{R}^{n}\rightarrow\mathbb{R}_{\ge 0}$ is a  state-dependent running cost and the system input is penalized quadratically with $R\in\mathbb{R}^{d\times d}$.
Accordingly, given an initial state $\mathbf{x}_0$ and an input sequence $\mathbf{U}$ of length $K$, we define the state-dependent cost as\vspace{-3pt}
$$
S(\mathbf{x}_0, \mathbf{U})
=
\phi(\mathbf{x}_K)
+
\sum_{k=0}^{K-1} c(\mathbf{x}_k),
$$
which aggregates the running state costs and the terminal cost along the trajectory induced by~\eqref{eq:dynamics}.

Model predictive path integral  control is a sampling-based, derivative-free algorithm that solves~\eqref{eq:cost} iteratively. 
At a high level, it proceeds as the following steps.

\textbf{Step 1 (Initialization).}
Initialize a nominal control sequence
$\hat{\mathbf{U}} = (\hat{\mathbf{u}}_0, \dots, \hat{\mathbf{u}}_{K-1})$,
which can be chosen as the zero sequence or generated using heuristic methods.

\textbf{Step 2 (Sampling).}
Sample $M$ i.i.d.\ perturbation sequences
$\mathbf{V}^{(m)} = (\mathbf{v}_0^{(m)}, \dots, \mathbf{v}_{K-1}^{(m)})$
according to\vspace{-3pt}
\begin{equation}
    \label{eq:definition_p}
\mathbf{v}_k^{(m)} \sim \mathcal{N}(\mathbf{0},\, \Sigma),  \;\;   p(\mathbf{V} )
    :=
    \prod_{k=0}^{K-1} \mathcal{N}(\mathbf{v}_k;\, \mathbf{0},\, \Sigma),
\end{equation}
where the covariance matrix  $\Sigma$ is chosen such that $R=\lambda \Sigma^{-1}$ for some scalar $\lambda>0$, referred to as the \emph{temperature parameter}. 
For each sampled perturbation sequence, 
construct the  perturbed control sequence
$\mathbf{U}^{(m)}$ by
\[
\mathbf{u}_k^{(m)} = \hat{\mathbf{u}}_k + \mathbf{v}_k^{(m)},
\qquad k=0,\dots,K-1, m=1,\dots,M.
\]

\textbf{Step 3 (Weighting).}
Each perturbed control sequence $\mathbf{U}^{(m)}$ is rolled out through the dynamics~\eqref{eq:dynamics} to obtain its associated trajectory cost 
$S^{(m)} := S\bigl(\mathbf{x}_{0},\, \mathbf{U}^{(m)} \bigr)$. 
We  define\vspace{-3pt}
\[
J_{\mathbf{x}_0,\hat{\mathbf{U}}}(\mathbf{V})
:=
S(\mathbf{x}_0, \hat{\mathbf{U}}+\mathbf{V})
+
\lambda \sum_{k=0}^{K-1} \hat{\mathbf{u}}_k^\top \Sigma^{-1}\mathbf{v}_k.
\]
Since $\mathbf{x}_0$, $\hat{\mathbf{U}}$, $\lambda$, and $\Sigma$ are fixed within each iteration round of the weight update, we simplify the notation by writing
$J(\mathbf{V})$ for $J_{\mathbf{x}_0,\hat{\mathbf{U}}}(\mathbf{V})$, and define  $J_m := J(\mathbf{V}^{(m)})$. 
We define the normalized \emph{importance weight} associated with the $m$th sample by\vspace{-3pt}
\begin{equation}\label{eq:weight_J}
    w_m
    =
    \frac{\exp\bigl(-J_m/\lambda\bigr)}
    {\sum_{n=1}^{M}\exp\bigl(-J_n/\lambda\bigr)}.
\end{equation}

\textbf{Step 4 (Update).}
Compute the new nominal control sequence $\hat{\mathbf{U}}$ as the weighted average\vspace{-3pt}
\begin{equation}\label{eq:update}
    \hat{\mathbf{u}}_k \;\leftarrow\; \hat{\mathbf{u}}_k + \sum_{m=1}^{M} w_m\, \mathbf{v}_k^{(m)}, \quad k = 0,\dots,K-1.
\end{equation}

\textbf{Step 5 (Iteration).}
Update the temperature parameter $\lambda$ according to a prescribed \emph{cooling schedule}, and update the covariance $\Sigma$ accordingly so that $R=\lambda\Sigma^{-1}$. 
Then return to \textbf{Step~2} with the updated nominal control sequence $\hat{\mathbf{U}}$. 
This process is repeated until a convergence criterion is met. 
To asymptotically recover the true deterministic optimum $\mathbf{U}^*$, the temperature parameter must converge to zero as the iteration proceeds, i.e., $\lambda \to 0$~\cite{homburger2025optimality}.

\section{Adaptive Temperature Cooling}\label{sec:general_ideas}

In the MPPI framework, a key factor governing the sampling iterations is the temperature parameter $\lambda$. 
Let $j$ denote the iteration index, and let $\lambda_j$ denote the temperature parameter at iteration $j$. 
In existing works on MPPI for deterministic control \cite{homburger2025optimality,halder2025trajectoryplanningsignaltemporal,yoon2022samplingcomplexitypathintegral}, the cooling schedule is typically chosen as a fixed geometric decay, namely,
\[
\lambda_{j+1} = \nu^2 \lambda_j,
\]
for some constant $\nu \in (0,1)$. 
Although such a mechanism guarantees that the temperature converges to zero, the choice of the fixed decay rate $\nu$ remains largely heuristic: an overly aggressive schedule may cause premature collapse of the importance weights, whereas an overly conservative one may lead to unnecessarily slow progress. 
A more promising alternative is to adopt an adaptive cooling schedule that adjusts the decay rate dynamically according to the ``current status" of the sampling process.

In this work, we propose to use entropy as a feedback signal to adaptively regulate the temperature. 
Formally, for a probability distribution $q$, its Shannon entropy is defined by
\[
H(q)=\mathbb{E}_{q}[-\log q].
\]
Given $M$ samples with normalized importance weights $\{w_m\}_{m=1}^{M}$, we define the normalized sampling entropy as
\[
\hat{H}
=
\frac{-\sum_{m=1}^{M} w_m\log w_m}{\log M}.
\]
Based on this quantity, we consider a generic adaptive cooling schedule of the form
\[
\lambda_{j+1}=\Gamma(\hat{H}_j)\lambda_j,
\]
where $\Gamma:(0,1]\to(0,1)$ is a feedback function of the current normalized entropy. 
We choose $\Gamma(\cdot)$ to be decreasing in $\hat{H}$ so that the temperature is reduced more aggressively when $\hat{H}$ is close to $1$, namely when the importance weights remain relatively spread out, and more conservatively when $\hat{H}$ is close to $0$, namely when the weights have already become concentrated. 
In this way, the cooling schedule automatically adapts to the current sampling state, thereby balancing fast convergence and protection against premature weight collapse.

The remainder of this paper develops this idea in a rigorous manner. 
In the present section, we first explain why Shannon entropy serves as a natural feedback signal by examining the associated free-energy functional (Section~\ref{sec:why_entropy}). 
We then decompose the total control error to reveal the two competing objectives that any cooling schedule must balance (Section~\ref{sec:error_decomp}). 
Section~\ref{sec:MPPI_bias} introduces the class of admissible entropy-based cooling schedules and shows that they ensure the asymptotic convergence property $\lambda_j \to 0$. 
Section~\ref{sec:anti_glass} derives a critical entropy threshold $H_c$ from sampling-complexity theory and constructs a smooth barrier mechanism to mitigate premature glassification. 
Finally, Section~\ref{sec:experiments} validates the proposed framework on challenging motion-planning tasks.

\subsection{Why Choose Entropy as a Feedback Signal}\label{sec:why_entropy}
To understand why Shannon entropy provides an appropriate feedback signal, we briefly revisit the original derivation of the MPPI framework and its connection to deterministic optimization.
In essence, MPPI lifts the \emph{pointwise optimization} problem in \eqref{eq:cost} to a \emph{distributional optimization} problem.
Instead of directly searching for a single optimal control sequence $\mathbf{U}^*_{\mathrm{det}}$,   a nominal control sequence $\hat{\mathbf{U}}$ is introduced and one seeks an optimal distribution $q(\mathbf{V})$ over control perturbations $\mathbf{V} \in \mathbb{R}^{d \times K}$ that concentrates on low-cost regions.
Within this formulation, the optimal perturbation distribution is obtained by minimizing the \emph{free-energy functional} \cite{williamsInformationTheoreticModel2017a}:
\begin{equation}
    \zeta_\lambda(q) = \mathbb{E}_{q}[S(\mathbf{x}_0, \hat{\mathbf{U}} + \mathbf{V})] + \lambda D_{\mathrm{KL}}(q \| p), \label{eq:free_energy_functional}
\end{equation}
where $p$ is the base exploration distribution defined in \eqref{eq:definition_p}, and $\lambda > 0$ is the temperature parameter.
As shown in \cite{williamsInformationTheoreticModel2017a}, minimizing \eqref{eq:free_energy_functional} is equivalent to solving the original deterministic optimization problem in \eqref{eq:cost}.

A critical property of this formulation in our context is the structural invariance of the functional during the iterative optimization process. In MPPI, the control cost matrix $R = \lambda \Sigma^{-1}$ is fixed throughout the iterations, the quadratic control penalty embedded within the KL-divergence term $\lambda D_{\mathrm{KL}}(q \| p)$ is maintained. This ensures that the relative weighting between the state-dependent cost $\mathbb{E}_q[S]$ and the control effort penalty is preserved, even as the temperature $\lambda$ is adjusted. Consequently, the free energy functional \eqref{eq:free_energy_functional} remains consistent and stable for the deterministic objective \eqref{eq:cost} across the entire cooling schedule.

Since $D_{\mathrm{KL}}(q\|p) = \mathbb{E}_{q}[-\log p] - H(q)$, this functional admits the decomposition
\begin{equation}
\label{eq:free_energy_decomp}
    \zeta_\lambda(q)
    = \underbrace{\mathbb{E}_{q}[S]}_{\text{expected cost}}
    + \lambda\,\underbrace{\mathbb{E}_{q}[-\log p]}_{\text{reference mismatch}}
    - \lambda\,\underbrace{H(q)}_{\text{entropy reward}}.
\end{equation}
It reveals that $\lambda$ governs the balance between a reference mismatch penalty and an entropy reward.  Large $\lambda$ favors distributional diffusion and consistency with the base measure $p$ (exploration-dominated regime); small $\lambda$ shifts priority toward cost minimization and weight concentration (exploitation-dominated regime).  The cooling schedule, therefore, dictates when the system should transition from exploration to exploitation.

Eq.~\eqref{eq:free_energy_decomp} shows that the Shannon entropy $H(q)$ is an intrinsic quantity in the free-energy trade-off while its sample-based surrogate $\hat{H}$ serves as a practical feedback signal for modulating the cooling rate $\Gamma(\hat{H})$. A suitable feedback signal must reflect how diffuse the current distribution is. By giving $\hat{H}
\approx 1$ for nearly uniform sampling and large $\lambda$, more rapid decay for $\lambda$ is necessary for efficiency. While $\hat{H}\to 0$ when the weight is concentrated and small $\lambda$, cooling should decelerate, indicating that further aggressive cooling may lead to premature concentration. Therefore, the temperature controlling $\Gamma(\hat{H})\in [0,1]$ should act negatively with respect to $\hat{H}$ that $d\Gamma(\hat{H})/d\hat{H}<0$.

\subsection{Error Decomposition}
\label{sec:error_decomp}
 
A standard calculus-of-variations argument applied to Eq. \eqref{eq:free_energy_functional} (setting $\delta \zeta_\lambda / \delta q = 0$) yields the unique minimizer
\begin{equation}\label{eq:optimal_dist}
    q^*(\mathbf{V}) = \frac{1}{\eta}\exp\!\Bigl(-\frac{S(\mathbf{V})}{\lambda}\Bigr)\, p(\mathbf{V}),
\end{equation}
with $\eta$ the normalizing constant.  The finite-sample estimate Eq.~\eqref{eq:weight_J}--\eqref{eq:update} is precisely the self-normalized importance-sampling approximation of $\mathbb{E}_{q^*}[\mathbf{v}]$ using samples drawn from $q$.  Taking the expectation under $q^*$ defines the ideal \emph{infinite-sample MPPI solution} at temperature $\lambda$:
\begin{equation}\label{eq:inf_sample}
    \tilde{\mathbf{U}}^*_{\mathrm{MPPI}}(\lambda) = \mathbb{E}_{q^*}[\mathbf{V}] = \frac{\int \mathbf{V}\, \exp(-S(\mathbf{V})/\lambda)\, p(\mathbf{V})\,d\mathbf{V}}{\int \exp(-S(\mathbf{V})/\lambda)\, p(\mathbf{V})\,d\mathbf{V}}.
\end{equation}
This quantity bridges the finite-sample MPPI estimate $\hat{\mathbf{U}}$ and the true deterministic optimum $\mathbf{U}^*_{\mathrm{det}}$: it removes finite-sample noise from $\hat{\mathbf{U}}$ but still carries a temperature-dependent bias relative to $\mathbf{U}^*_{\mathrm{det}}$.  By the triangle inequality, the total control error decomposes as
\begin{equation}\label{eq:error_decomp_eq}
    \bigl\|\hat{\mathbf{U}} - \mathbf{U}^*_{\mathrm{det}}\bigr\|
    \;\leq\;
    \underbrace{\bigl\|\hat{\mathbf{U}} - \tilde{\mathbf{U}}^*_{\mathrm{MPPI}}\bigr\|}_{\text{MC estimation error}}
    \;+\;
    \underbrace{\bigl\|\tilde{\mathbf{U}}^*_{\mathrm{MPPI}} - \mathbf{U}^*_{\mathrm{det}}\bigr\|}_{\text{MPPI bias}}.
\end{equation}
The two terms are governed by fundamentally different mechanisms and impose opposing requirements on the temperature schedule:
 
\emph{MPPI Bias.}
Under the regularity conditions of~\cite{homburger2025optimality}, the gap between the infinite-sample PI solution and the deterministic optimum satisfies $\|\tilde{\mathbf{U}}^*_{\mathrm{MPPI}}(\beta) - \mathbf{U}^*_{\mathrm{det}}\| = \mathcal{O}(\beta^2)$ with $\beta = \sqrt{\lambda/\lambda_0}$.  Driving this term to zero therefore requires $\lambda_j \to 0$, i.e., the temperature must eventually be cooled to zero.  We formalize the admissible class of cooling schedules that guarantee this convergence in Section~\ref{sec:MPPI_bias}.
 
\emph{MC Estimation Error.}
Even if the MPPI bias vanishes, the finite-sample estimate $\hat{\mathbf{U}}$ can deviate significantly from $\tilde{\mathbf{U}}^*_{\mathrm{MPPI}}$ when the importance weights degenerate.  Bounding this error within a tolerance at prescribed risk requires the Effective Sample Size (ESS) to remain sufficiently large, which in turn demands that the Shannon entropy of the weight distribution does not fall below a critical threshold.  This is precisely the anti-glassification mechanism developed in Section~\ref{sec:anti_glass}.

\section{Admissible Entropy-Based Cooling Schedule}
\label{sec:MPPI_bias}

As motivated by the error decomposition in Eq.~\eqref{eq:error_decomp_eq}, driving the MPPI bias to zero requires a temperature schedule that ensures $\lambda_j \to 0$ as $j \to \infty$. Intuitively, one might satisfy this by requiring a global contraction bound $\sup_{\hat{H}\in[0,1]} \Gamma(\hat{H}) \le \overline{\Gamma} < 1$, which guarantees that the temperature vanishes at least geometrically. However, such a global restriction is unnecessarily stringent and excludes adaptive cooling policies where the contraction rate vanishes as the ensemble stabilizes (i.e., $\lim_{\hat{H} \to 0} \Gamma(\hat{H}) = 1$). 

To rigorously analyze these flexible schedules, we must distinguish between the functional properties of the cooling schedule $\Gamma(\hat{H})$ and the regularity of the cost landscape that determines the evolution of the empirical entropy $\hat{H}$.

\begin{mydef}[Admissible Cooling Function]
\label{def:admissible_cooling_func}
A continuous function $\Gamma: (0,1] \to (0,1)$ is called an \emph{admissible cooling function} if it satisfies the \textbf{pointwise contraction} property: $\Gamma(\hat{H}) < 1$ for all $\hat{H} \in (0,1]$. We denote the set of such functions by $\mathcal{C}_\Gamma$.
\end{mydef}

The definition above specifies when an entropy-based cooling law is contractive, but contractivity alone is not sufficient for asymptotic cooling. Indeed, if the empirical entropy were allowed to vanish at a non-zero temperature, the update factor could approach a degenerate regime without forcing further decay of $\lambda_j$. To exclude this possibility, we impose a mild boundedness assumption on the cost functional. This assumption yields a strictly positive lower bound on the empirical entropy whenever the temperature is bounded away from zero, which will be the key ingredient in the convergence argument below.

\begin{myass}
\label{assum:bounded_S}
    The cost functional $J(\mathbf{V})$ is bounded on the support of the sampling distribution $p$, such that its oscillation $\Delta_J := \sup J - \inf J$ is finite.
\end{myass}

Assumption~\ref{assum:bounded_S} is mild and holds in many practical robotics settings with compact state and input domains. The finiteness of the cost oscillation $\Delta_J$ provides quantitative control over the relative spread of the Gibbs weights, thereby excluding entropy collapse at any strictly positive temperature. This observation leads directly to a uniform positive lower bound for the empirical entropy.

\begin{mylem}[Positivity of Empirical Entropy]
\label{lem:entropy_positivity}
 For any fixed sample size $M > 1$ and temperature $\lambda \ge \epsilon > 0$, the normalized empirical entropy $\hat{H}$ satisfies $\hat{H} \ge \delta_\epsilon$, where:
\begin{equation}
\label{eq:delta_epsilon}
    \delta_\epsilon = \frac{\log(1 + (M-1)e^{-\Delta_J/\epsilon})}{\log M} > 0.
\end{equation}
\end{mylem}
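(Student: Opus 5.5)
The plan is to bound the Shannon entropy of the weights from below by their min-entropy, $-\log \max_m w_m$. We then control $\max_m w_m$ using the bounded cost oscillation from Assumption~\ref{assum:bounded_S}. The exact form of $\delta_\epsilon$ in \eqref{eq:delta_epsilon} suggests exactly this route.

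First, fix a realization of the samples and let $m^\star$ be an index with $J_{m^\star}=\min_n J_n$, so that $w_{m^\star}=\max_n w_n$. By \eqref{eq:weight_J}, for every $n$,
\[
\frac{w_n}{w_{m^\star}}=\exp\!\Bigl(-\frac{J_n-J_{m^\star}}{\lambda}\Bigr)\ \ge\ e^{-\Delta_J/\lambda}\ \ge\ e^{-\Delta_J/\epsilon}=:r,
\]
since $0\le J_n-J_{m^\star}\le \Delta_J$ on the support of $p$ and $\lambda\ge\epsilon$.

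Second, summing over $n$ and using $\sum_n w_n=1$ gives
\[
\frac{1}{w_{m^\star}}=\sum_{n=1}^M \frac{w_n}{w_{m^\star}}\ \ge\ 1+(M-1)r,
\]
so that $\max_n w_n\le \bigl(1+(M-1)r\bigr)^{-1}$.

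Third, we compare Shannon entropy with min-entropy. Since $-\log w_m\ge -\log w_{m^\star}$ for every $m$, we have
\[
-\sum_m w_m\log w_m\ \ge\ -\sum_m w_m\log w_{m^\star}=-\log w_{m^\star}\ \ge\ \log\bigl(1+(M-1)e^{-\Delta_J/\epsilon}\bigr).
\]
Dividing by $\log M>0$ (because $M>1$) yields $\hat H\ge\delta_\epsilon$.

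Finally, positivity of $\delta_\epsilon$ follows because $(M-1)e^{-\Delta_J/\epsilon}>0$, so the argument of the logarithm exceeds $1$.

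The only delicate point is the use of Assumption~\ref{assum:bounded_S}. The bound $J_n-J_{m^\star}\le\Delta_J$ must hold for every sampled $\mathbf V^{(n)}$, uniformly over the nominal sequence and the iteration. I would state this explicitly, interpreting $\Delta_J$ as a uniform oscillation bound over the samples. Everything else is elementary, and the resulting bound is deterministic, holding for every sample realization.
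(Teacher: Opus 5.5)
Your proof is correct and matches the paper's argument: you lower-bound the Shannon entropy by $\log(1/\max_m w_m)$, then use the weight-ratio bound $w_m \ge \max_n w_n \, e^{-\Delta_J/\epsilon}$ coming from the finite oscillation $\Delta_J$, and sum over $m$ to obtain $1/\max_m w_m \ge 1+(M-1)e^{-\Delta_J/\epsilon}$. Your remark that $\Delta_J$ must be a uniform bound across samples, nominal sequences and iterations makes explicit how the paper implicitly uses Assumption~\ref{assum:bounded_S}.
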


\begin{proof}
Let $\kappa = e^{\Delta_J/\epsilon} > 1$. From the Gibbs weight in Eq. \eqref{eq:weight_J}, the ratio of any two weights is bounded by $w_m/w_n \le \exp(\Delta_J/\lambda) \le \kappa$ for $\lambda \ge \epsilon$. We seek a strictly positive lower bound for the Shannon entropy $H_M(w) = \sum_{m=1}^M w_m \log(1/w_m)$.

Let $w_{(1)}$ denote the maximum weight. Since $w_m \le w_{(1)}$ for all $m = 1, \dots, M$, we have $\log(1/w_m) \ge \log(1/w_{(1)})$. Thus:
$$
    H_M(w) = \sum_{m=1}^M w_m \log \frac{1}{w_m} \ge \sum_{m=1}^M w_m \log \frac{1}{w_{(1)}} = \log \frac{1}{w_{(1)}}.
$$
To bound $w_{(1)}$ from above, we use the ratio constraint $w_{(1)}/w_m \le \kappa$, which implies $w_m \ge w_{(1)}/\kappa$. Summing over all weights:
$$
    1 = w_{(1)} + \sum_{m \neq (1)} w_m \ge w_{(1)} + (M-1) \frac{w_{(1)}}{\kappa}.
$$
This yields $1/w_{(1)} \ge 1 + (M-1)/\kappa$. Substituting this into the entropy bound:
$$
    H_M(w) \ge \log \left( 1 + \frac{M-1}{\kappa} \right) = \log(1 + (M-1)e^{-\Delta_J/\epsilon}).
$$
Dividing by $\log M$ yields the normalized lower bound $\delta_\epsilon$ in Eq.~\eqref{eq:delta_epsilon}. Since $M > 1, \Delta_J > 0,$ and $\epsilon > 0$, we have $e^{-\Delta_J/\epsilon} \in (0, 1)$, ensuring $\delta_\epsilon > 0$ for all $\lambda \ge \epsilon$.
\end{proof}

With the positivity of entropy established for any non-zero temperature, we now prove that the adaptive cooling process converges to the deterministic limit.

\begin{mylem}[Convergence of Adaptive Cooling]
\label{lem:convergent_cooling}
Let $\Gamma \in \mathcal{C}_\Gamma$ and consider the multiplicative update $\lambda_{j+1}=\Gamma(\hat{H}_j)\,\lambda_j$ with $\lambda_0>0$. Then $\lambda_j \to 0$ as $j \to \infty$.
\end{mylem}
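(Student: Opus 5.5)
The plan is to argue by monotonicity and contradiction, using Lemma~\ref{lem:entropy_positivity} to keep the empirical entropy away from the degenerate end of $(0,1]$ and then exploiting continuity of $\Gamma$ on a compact interval.

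\textbf{Step 1: the limit exists.} Since $\Gamma$ takes values in $(0,1)$ and $\lambda_0>0$, induction gives $0<\lambda_{j+1}<\lambda_j$ for every $j$. Hence $(\lambda_j)$ is strictly decreasing and bounded below by $0$, so it converges to some $\lambda_\infty \ge 0$.

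\textbf{Step 2: uniform control of the entropy under a contradiction hypothesis.} Suppose $\lambda_\infty>0$. Set $\epsilon:=\lambda_\infty$; then $\lambda_j \ge \epsilon$ for all $j$. By Lemma~\ref{lem:entropy_positivity}, every empirical entropy satisfies $\hat{H}_j \ge \delta_\epsilon>0$. This holds uniformly in $j$, and also for every realization of the samples, because $\delta_\epsilon$ depends only on $M$, $\Delta_J$ and $\epsilon$ (Assumption~\ref{assum:bounded_S}). Thus all $\hat{H}_j$ lie in the compact interval $[\delta_\epsilon,1]\subset(0,1]$.

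\textbf{Step 3: uniform contraction and the contradiction.} Because $\Gamma$ is continuous on $[\delta_\epsilon,1]$ and pointwise strictly less than $1$ there, it attains its maximum $\overline{\Gamma}_\epsilon:=\max_{h\in[\delta_\epsilon,1]}\Gamma(h)<1$. It follows that
\[
\lambda_j \le \overline{\Gamma}_\epsilon^{\,j}\lambda_0 \to 0 ,
\]
which contradicts $\lambda_j\ge\epsilon>0$. Hence $\lambda_\infty=0$. Equivalently, one can avoid the explicit bound: $\lambda_{j+1}/\lambda_j\le\overline{\Gamma}_\epsilon<1$ while $\lambda_{j+1}/\lambda_j\to\lambda_\infty/\lambda_\infty=1$, which is again a contradiction.

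\textbf{Main obstacle.} The delicate point is Step 2, which must prevent $\hat{H}_j$ from drifting toward $0$. There $\Gamma$ may approach $1$, as in schedules with $\lim_{\hat{H}\to 0}\Gamma(\hat{H})=1$, and pointwise contraction alone would not force $\lambda_j\to0$. Compactness of $[\delta_\epsilon,1]$ converts pointwise contraction into a uniform one, but this only works if the entropy lower bound from Lemma~\ref{lem:entropy_positivity} is uniform over iterations. I will therefore check that $\Delta_J$ in Assumption~\ref{assum:bounded_S} is taken as a bound valid for every iteration, i.e.\ independent of $\hat{\mathbf{U}}$ and of $\lambda_j$. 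If it is not, I will state that uniformity explicitly as part of the assumption. The degenerate case $\Delta_J=0$, where $\hat{H}\equiv1$, is trivially covered.
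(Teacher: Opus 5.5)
Your proposal is correct and follows the paper's proof essentially step for step. Both arguments first get monotone convergence to a limit $L\ge 0$. Then, assuming $L>0$, they invoke Lemma~\ref{lem:entropy_positivity} to confine every $\hat H_j$ to $[\delta_L,1]$, and use continuity of $\Gamma$ on that compact interval to obtain a uniform contraction factor $\overline{\Gamma}_L<1$, which contradicts $L>0$.

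Your remark that $\Delta_J$ must be a single bound valid for every iteration, independent of $\hat{\mathbf U}$ and $\lambda_j$, is something the paper leaves implicit. Stating it explicitly is a sound tightening of the argument, not a different route.
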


\begin{proof}
For any sample path $v$, the sequence $\{\lambda_j\}$ is strictly decreasing since $\Gamma(\hat{H}_j) < 1$ for all $\hat{H}_j > 0$. Since $\lambda_j$ is bounded below by $0$, it must converge to a limit $L \ge 0$. 

Suppose $L > 0$. By the \emph{non-vanishing drive} property, since $\lambda_j \ge L$, there exists a constant $\delta_L > 0$ such that $\hat{H}_j \ge \delta_L$ for all $j$. On the compact interval $[\delta_L, 1]$, the continuity of $\Gamma$ ensures that it attains a maximum value $\overline{\Gamma}_L := \max_{\hat{H} \in [\delta_L, 1]} \Gamma(\hat{H})$. Due to the \emph{pointwise contraction} property, $\overline{\Gamma}_L < 1$. 

It follows that $\lambda_{j+1} \le \overline{\Gamma}_L \lambda_j$ for all $j$, which implies $\lambda_j \le \lambda_0 (\overline{\Gamma}_L)^j \to 0$ as $j \to \infty$. This contradicts the assumption $L > 0$. Thus, $L$ must be $0$, and $\lambda_j \to 0$.
\end{proof}

\begin{mythm}
	\label{thm:asymptotic_optimality}
	Let $\Gamma\in\mathcal{C}_\Gamma$ and define the coupled updates
	$\lambda_{j+1}=\Gamma(\hat{H}_j)\,\lambda_j$,\;
	$\Sigma_{j+1}=\Gamma(\hat{H}_j)\,\Sigma_j$
	with $\lambda_0>0$, $\Sigma_0\succ 0$.
	Set $\beta_j:=\sqrt{\lambda_j/\lambda_0}$.
	Assume the cost functional $J$ satisfies the regularity conditions of {\rm\cite{homburger2025optimality}}. Then:
	\begin{enumerate}
		\item[\emph{(a)}] $\beta_j\to 0$ as $j\to\infty$ almost surely;
		\item[\emph{(b)}] $\|\tilde{\mathbf{U}}^*_{\mathrm{MPPI}}(\beta_j)-\mathbf{U}^*_{\mathrm{det}}\| = \mathcal{O}(\beta_j^2)\to 0$ (\cite{homburger2025optimality});
		\item[\emph{(c)}] At each finite iteration $j$, $\Sigma_j\succ 0$.
	\end{enumerate}
\end{mythm}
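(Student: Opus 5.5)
The plan is to derive all three parts from Lemma~\ref{lem:convergent_cooling}, Lemma~\ref{lem:entropy_positivity}, and the cited bias bound of \cite{homburger2025optimality}. The coupled update should be viewed as a scalar recursion for $\lambda_j$ with the covariance attached. Because $\Sigma_{j+1}=\Gamma(\hat H_j)\Sigma_j$ and $\lambda_{j+1}=\Gamma(\hat H_j)\lambda_j$ use the same factor, induction gives $\Sigma_j=(\lambda_j/\lambda_0)\Sigma_0=\beta_j^2\Sigma_0$. Hence $R=\lambda_j\Sigma_j^{-1}=\lambda_0\Sigma_0^{-1}$ stays constant. This is exactly the joint annealing of temperature and exploration covariance analyzed in \cite{homburger2025optimality}, with $\beta_j$ playing the role of their scaling parameter.

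For part (a), I fix an arbitrary realization of the samples, so that the sequence $\{\hat H_j\}$ is deterministic given that realization. I then apply Lemma~\ref{lem:convergent_cooling} pathwise. Monotonicity holds because $\Gamma<1$, so $\lambda_j\downarrow L\ge 0$. If $L>0$, Lemma~\ref{lem:entropy_positivity} with $\epsilon=L$ gives $\hat H_j\ge\delta_L>0$ uniformly in $j$. Continuity of $\Gamma$ on $[\delta_L,1]$ together with pointwise contraction then gives a uniform factor $\overline\Gamma_L<1$, which yields geometric decay and contradicts $L>0$.

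The step I expect to need the most care is uniformity. The bound $\delta_L$ depends only on $M$, $\Delta_J$, and $L$, not on the realization. This relies on Assumption~\ref{assum:bounded_S} holding uniformly across iterations, even though $J$ depends on $\hat{\mathbf U}_j$ and $\lambda_j$. I would state explicitly that $\Delta_J$ is taken as a bound uniform over the iterates, which is implicit in the regularity conditions. Once that is in place, the argument holds on every sample path, so $\lambda_j\to0$ surely, and in particular almost surely. Then $\beta_j=\sqrt{\lambda_j/\lambda_0}\to0$ by continuity of the square root.

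For part (b), I invoke the bias estimate $\|\tilde{\mathbf U}^*_{\mathrm{MPPI}}(\beta)-\mathbf U^*_{\mathrm{det}}\|\le C\beta^2$ for $\beta\le\bar\beta$ from \cite{homburger2025optimality}. Its hypotheses are met because $\Sigma_j=\beta_j^2\Sigma_0$ and $R$ is fixed, which matches their parametrization. By (a), $\beta_j\le\bar\beta$ for all large $j$, so the bound applies along the sequence and tends to zero.

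For part (c), $\Gamma$ maps into $(0,1)$, so each factor is strictly positive. A positive scalar multiple of a positive definite matrix is positive definite, so induction from $\Sigma_0\succ0$ gives $\Sigma_j=\prod_{i<j}\Gamma(\hat H_i)\,\Sigma_0\succ0$ for every finite $j$.
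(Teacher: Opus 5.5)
Your proposal is correct and follows the paper's proof: (a) is the pathwise monotone-limit-plus-contradiction argument of Lemma~\ref{lem:convergent_cooling}, using the entropy floor of Lemma~\ref{lem:entropy_positivity}; (b) is a direct appeal to \cite[Theorem~1]{homburger2025optimality}; and (c) is positivity of the product of cooling factors. Your extra details make explicit what the paper leaves implicit: the identity $\Sigma_j=\beta_j^2\Sigma_0$ (so $R$ stays fixed and matches the parametrization of \cite{homburger2025optimality}), and the need for a bound on $\Delta_J$ that is uniform over the iterates. That uniform bound comes from Assumption~\ref{assum:bounded_S}, not from the regularity conditions of \cite{homburger2025optimality} as you suggest.
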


\begin{proof}
	Part~(a) follows directly from Lemma~\ref{lem:convergent_cooling}. 
	Part~(b) is a direct application of \cite[Theorem~1]{homburger2025optimality}.
	For part~(c), since $\Gamma(\hat{H}) > 0$ for all $\hat{H} \in (0,1]$, the product $\prod_{k=0}^{j-1} \Gamma(\hat{H}_k)$ is strictly positive for any finite $j$, ensuring the sampling covariance remains non-degenerate.
\end{proof}

\begin{remark}[From fixed to entropy-driven cooling]
\label{rem:fixed_vs_adaptive}
All prior deterministic MPPI formulations employ a constant geometric decay $\Gamma(\hat{H})\equiv\nu^2$ for some fixed $\nu\in(0,1)$ (\cite{homburger2025optimality,halder2025trajectoryplanningsignaltemporal,yoon2022samplingcomplexitypathintegral}).
Such constant schedules trivially belong to $\mathcal{C}_\Gamma$ but are entirely agnostic to the informational state of the trajectory ensemble.
In contrast, the entropy-driven cooling $\Gamma_{\mathrm{base}}(\hat{H})=\nu^2(1-\hat{H})$ considered in our experiments adapts the decay rate automatically. When the weight distribution is diffuse 
($\hat{H}$ large), $\Gamma_{\mathrm{base}}$ is small  and cooling is aggressive. As the distribution concentrates ($\hat{H}\to 0^+$), $\Gamma_{\mathrm{base}}$ approaches $\nu^2$ and cooling slows down.
Since $\Gamma_{\mathrm{base}}(1)=0$, this function does not strictly belong to $\mathcal{C}_\Gamma$. However, $\hat{H}=1$ (perfectly uniform weights) has probability zero under continuous cost distributions, and the composite schedule $\bar{\Gamma}$ of Section~\ref{sec:anti_glass} satisfies 
$\bar{\Gamma}\in\mathcal{C}_\Gamma$ on the entire domain $(0,1]$.
\end{remark}

\section{Anti-Glassification Mechanism}
\label{sec:anti_glass}
The previous section addressed the MPPI bias (second term in Eq.~\eqref{eq:error_decomp_eq}) by establishing that any admissible schedule $\Gamma\in\mathcal{C}_\Gamma$ drives
$\lambda_j\to 0$ and thus eliminates the bias asymptotically.
We now turn to the MC estimation error (first term in Eq.~\eqref{eq:error_decomp_eq}), which imposes an opposing constraint: if $\lambda$ decreases too aggressively, the
importance weights degenerate, the Effective Sample Size collapses, and the finite-sample estimate $\hat{\mathbf{U}}$ diverges from $\tilde{\mathbf{U}}^*_{\mathrm{MPPI}}$.
This premature localization is analogous to glassy freezing in non-equilibrium thermodynamics \cite{kirkpatrick1983optimization}.
To quantify the point at which this failure occurs, we derive a critical entropy threshold $H_c$ below which the sampling complexity bounds of \cite{yoon2022samplingcomplexitypathintegral} are violated, and then introduce a smooth barrier function that prevents the cooling schedule from pushing $H$ below $H_c$.

Our goal is to derive a critical entropy threshold $H_c$
below which the MC error bounds are violated, and then use
it to construct a smooth barrier that prevents over-cooling.
The argument proceeds in four steps: 
(i)~recall the MC error bound and its dependence on
$\lambda$ and the sample size $M$
(Lemma~\ref{lem:mc_error});
(ii)~characterize how the ESS depends on the temperature
$\lambda$ and sample size $M$, providing a lower bound
(Corollaries~\ref{col:ess_bound_no_limit}--\ref{col:ess_bound});
(iii)~connect ESS to the Shannon entropy via the
R\'{e}nyi entropy (Lemma~\ref{lem:entropy_ess});
(iv)~combine these to obtain the threshold $H_c$
(Theorem~\ref{thm:critical_entropy}).

We begin by stating the
finite-sample error bound from
\cite{yoon2022samplingcomplexitypathintegral}, which
quantifies how the MC estimation error depends on the temperature $\lambda$, the sample size $M$, and
the prescribed tolerance $(\epsilon_1,\rho_1)$, $(\epsilon_2,\rho_2)$.
\begin{myass}[\cite{yoon2022samplingcomplexitypathintegral}]
    We assume that the trajectory cost $J(\mathbf{V})$ utilized in the Gibbs weights is non-negative. This assumption holds because the self-normalized importance sampling (SNIS) scheme is invariant to constant shifts in the cost functional. For any constant $C \in \mathbb{R}$, the normalized weight $w_m$ remains unchanged:
\begin{equation}
    w_m = \frac{\exp(-(J_m + C)/\lambda)}{\sum_{n=1}^M \exp(-(J_n + C)/\lambda)} = \frac{\exp(-J_m/\lambda)}{\sum_{n=1}^M \exp(-J_n/\lambda)}.
\end{equation}
In practice, one can always ensure $J \geq 0$ by shifting the costs relative to the minimum sampled value (i.e., $J \leftarrow (J - \min\{J_m\}_{m=1}^M)$) or adding a sufficiently large constant. This numerical shift does not alter the resulting control update $\hat{u}$ but allows the application of concentration inequalities that require non-negative exponents.
    \label{ass:non_negative_cost}
\end{myass}

\begin{mylem}[\cite{yoon2022samplingcomplexitypathintegral}, Proposition 1]
\label{lem:mc_error}
Under Assumption~\ref{ass:non_negative_cost}, the multiplicative and additive error between the true deterministic optimal control $[\mathbf{u}_k^*]_i$ and the Monte Carlo (MC) estimated control $[\hat{\mathbf{u}}_k^*]_i$ in each dimension $i$ is bounded by:
$$(1-\frac{\epsilon_1}{\mathbb{E}[w]})([\mathbf{u}_k^*]_i-\epsilon_2) \le [\hat{\mathbf{u}}_k^*]_i \le (1+\frac{\epsilon_1}{\mathbb{E}[w]})([\mathbf{u}_k^*]_i+\epsilon_2)$$
This bound holds with a risk probability $\rho_2$ associated with the additive error $\epsilon_2$, defined as:
$$\rho_1 := 2e^{-M\epsilon_1^2},\quad \rho_2 := \frac{1+\sqrt{2}}{M\epsilon_2^2}e^{\frac{2\mathbb{E}[J_m]}{\lambda}}$$
\end{mylem}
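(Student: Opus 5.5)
This lemma is quoted from Proposition~1 of \cite{yoon2022samplingcomplexitypathintegral}, so the plan is to reconstruct its proof in that setting rather than to derive anything new. The key observation is that $\hat{\mathbf{u}}_k$ is a self-normalized importance sampling ratio. Write the unnormalized weights as $\tilde w_m := \exp(-J_m/\lambda)$. Then
\[
[\hat{\mathbf{u}}_k]_i = \frac{\frac{1}{M}\sum_m \tilde w_m [\mathbf{v}^{(m)}_k]_i}{\frac{1}{M}\sum_m \tilde w_m},
\]
and the infinite-sample target is the ratio of the corresponding expectations, $\mathbb{E}[\tilde w\, v]/\mathbb{E}[\tilde w]$. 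The proof controls the denominator and the numerator separately and then combines the two bounds by a union bound.

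\textbf{Denominator (multiplicative error).} By Assumption~\ref{ass:non_negative_cost}, $J_m \ge 0$, so $\tilde w_m \in (0,1]$ is bounded. Hoeffding's inequality then gives
\[
\Bigl|\tfrac1M\sum_m \tilde w_m - \mathbb{E}[w]\Bigr| \le \epsilon_1
\]
with probability at least $1-2e^{-2M\epsilon_1^2}$, which is at least $1-\rho_1$. Dividing through by $\mathbb{E}[w]$ gives the multiplicative factors $1\pm \epsilon_1/\mathbb{E}[w]$ in the statement.

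\textbf{Numerator (additive error).} The perturbations are Gaussian, so the summands $\tilde w_m v_m$ are unbounded and Hoeffding does not apply. Instead I would use Chebyshev's inequality, which requires a bound on $\mathrm{Var}(\tilde w\, v)/\mathbb{E}[\tilde w]^2$. The numerator satisfies $\mathbb{E}[\tilde w^2 v^2] \le \mathbb{E}[v^2]$ because $\tilde w \le 1$. For the denominator, Jensen's inequality gives $\mathbb{E}[\tilde w] \ge e^{-\mathbb{E}[J]/\lambda}$, and hence $1/\mathbb{E}[\tilde w]^2 \le e^{2\mathbb{E}[J_m]/\lambda}$. This produces the factor $e^{2\mathbb{E}[J_m]/\lambda}/(M\epsilon_2^2)$ appearing in $\rho_2$. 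The constant $1+\sqrt2$ should come from bounding the Gaussian moment terms together with the cross term that arises when the error of the ratio is split into a numerator deviation and a denominator-coupled deviation.

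\textbf{Combining.} On the intersection of the two good events, the ratio is sandwiched between the two-sided bounds in the statement, with total failure probability at most $\rho_1+\rho_2$.

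The main obstacle is the additive step. The Chebyshev bound must be normalized consistently by $\mathbb{E}[w]$, and the exact constant $1+\sqrt2$ has to be recovered. I would take the variance computation directly from \cite{yoon2022samplingcomplexitypathintegral} and not re-derive it in full here.
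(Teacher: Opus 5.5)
The paper gives no argument for this lemma beyond citing Yoon et al. Your skeleton (Hoeffding on the denominator, Chebyshev plus Jensen on the numerator, union bound) is therefore the natural reconstruction, and both of your probability estimates are correct. The step that fails is the combination.

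Set $\bar w:=\frac1M\sum_m\tilde w_m$, $x:=\epsilon_1/\mathbb{E}[w]$ and $y:=\frac{1}{M\,\mathbb{E}[w]}\sum_m\tilde w_m[\mathbf{v}_k^{(m)}]_i$. Then $[\hat{\mathbf{u}}_k]_i=y\cdot\mathbb{E}[w]/\bar w$. So the factor that enters is the reciprocal $\mathbb{E}[w]/\bar w\in[(1+x)^{-1},(1-x)^{-1}]$, which needs $x<1$. It is not $\bar w/\mathbb{E}[w]\in[1-x,1+x]$, which is what ``dividing through by $\mathbb{E}[w]$'' gives you.

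The lower factor is harmless because $(1+x)^{-1}\ge 1-x$. The upper one is not, since $(1-x)^{-1}-(1+x)=x^2/(1-x)>0$. Take $\bar w$ close to $\mathbb{E}[w]-\epsilon_1$ and $y$ close to $u^*+\epsilon_2>0$. Both of your good events then hold, yet $[\hat{\mathbf{u}}_k]_i>(1+x)(u^*+\epsilon_2)$.

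There is also a sign problem. From $y\in[u^*-\epsilon_2,u^*+\epsilon_2]$ and a positive factor in $[a,b]$ you may conclude $a(u^*-\epsilon_2)\le\cdot\le b(u^*+\epsilon_2)$ only when $u^*-\epsilon_2\ge 0$. For a negative endpoint the extreme factor is $b$, not $a$.

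So ``the ratio is sandwiched between the bounds in the statement'' does not follow. What your two events actually give, under $u^*\ge\epsilon_2$, is $(u^*-\epsilon_2)/(1+x)\le[\hat{\mathbf{u}}_k]_i\le(u^*+\epsilon_2)/(1-x)$. To obtain the literal factor $1+x$, run the lower-tail Hoeffding bound at tolerance $\epsilon_1/(1+x)$. The total Hoeffding risk is then at most $2e^{-2M\epsilon_1^2/(1+x)^2}$, which stays below $\rho_1=2e^{-M\epsilon_1^2}$ exactly when $x\le\sqrt2-1$. The sign restriction must be stated explicitly either way.

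Your explanation of the constant $1+\sqrt2$ is also off. Your Chebyshev step is already complete. Since $\mathrm{Var}(\tilde w v)\le\mathbb{E}[\tilde w^2v^2]\le\mathbb{E}[v^2]=\Sigma_{ii}$, the additive failure probability is at most $\Sigma_{ii}\,e^{2\mathbb{E}[J_m]/\lambda}/(M\epsilon_2^2)$. You have already moved all denominator randomness into the Hoeffding event, so no ``denominator-coupled'' cross term remains to generate $1+\sqrt2$.

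The real point is that a $\Sigma$-free constant can dominate your bound only under a noise normalization such as unit-variance perturbations. In that case $1+\sqrt2$ arises, for example, from the looser estimate $\mathbb{E}[\tilde w^2v^2]\le\mathbb{E}[\tilde w^2]\mathbb{E}[v^2]+\sqrt{\mathrm{Var}(\tilde w^2)\mathrm{Var}(v^2)}\le 1+\sqrt2$, and your sharper constant $1$ already suffices. In this paper $\Sigma$ is arbitrary and is annealed, so you should state that normalization, or keep $\Sigma_{ii}$, rather than defer to the reference.

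Finally, say explicitly what you tacitly assume: $[\mathbf{u}_k^*]_i$ must be read as the infinite-sample MPPI control $\mathbb{E}[\tilde w v]/\mathbb{E}[\tilde w]$. A pure sampling argument cannot bound the distance to the true deterministic optimum, because that distance also contains the temperature bias.
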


The key observation is that the risk $\rho_2$ couples $M$ and the exponential factor $e^{2\mathbb{E}[J_m]/\lambda}$. As $\lambda$ decreases, $\rho_2$ grows exponentially unless $M$ is increased accordingly. And the expected trajectory cost $\mathbb{E}[J_m]$, which itself depends on the cost landscape and is generally difficult to characterize \emph{a priori}.
Rather than tracking these quantities separately, we
next show that the Effective Sample Size (ESS) provides a single
quantity that governs this trade-off.
\begin{mydef}[Effective Sample Size]
    For $M$ unnormalized weights $\mathbf{w}_m = e^{-J_m/\lambda}$, the Effective Sample Size is defined as
$\mathrm{ESS} = {\bigl(\sum_{m=1}^{M} \mathbf{w}_m\bigr)^2}/{\sum_{m=1}^{M} \mathbf{w}_m^2}.$
\end{mydef}

We first work in the infinite-sample regime, where the sample averages can be replaced by their expectations. This allows a clean algebraic reduction that absorbs all quantities into a single ESS lower bound naturally. 
We then lift the result to the finite-sample setting by
adding a Hoeffding concentration layer, showing that the
same bound holds with high probability at a negligible
additional risk.
\begin{mycol}[ESS Lower Bound under Infinite-Sample]
\label{col:ess_bound_no_limit}
For a system evaluated with $M$ sample trajectories and an expected trajectory cost $\mathbb{E}[J_m]$, the ESS of the Boltzmann-weighted trajectories scales according to the temperature $\lambda$ satisfying the following relationship:
$$ESS \ge M e^{-\frac{2\mathbb{E}[J_m]}{\lambda}}$$
\end{mycol}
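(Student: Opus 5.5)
We will work in the infinite-sample regime described just before the statement. Here empirical averages are replaced by their expectations under the sampling distribution. Dividing numerator and denominator of the ESS definition by $M^2$ gives
\[
\mathrm{ESS} \;=\; \frac{\bigl(\tfrac{1}{M}\sum_{m} \mathbf{w}_m\bigr)^2}{\tfrac{1}{M^2}\sum_m \mathbf{w}_m^2}
\;=\; M\,\frac{\bigl(\tfrac{1}{M}\sum_{m} \mathbf{w}_m\bigr)^2}{\tfrac{1}{M}\sum_m \mathbf{w}_m^2}
\;\longrightarrow\; M\,\frac{\bigl(\mathbb{E}[e^{-J/\lambda}]\bigr)^2}{\mathbb{E}[e^{-2J/\lambda}]}.
\]
The whole claim therefore reduces to bounding the ratio $\bigl(\mathbb{E}[e^{-J/\lambda}]\bigr)^2/\mathbb{E}[e^{-2J/\lambda}]$ from below by $e^{-2\mathbb{E}[J]/\lambda}$.

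\textbf{Numerator.} We apply Jensen's inequality to the convex map $t\mapsto e^{-t/\lambda}$, which gives $\mathbb{E}[e^{-J/\lambda}]\ge e^{-\mathbb{E}[J]/\lambda}$. Both sides are positive, so squaring preserves the inequality:
\[
\bigl(\mathbb{E}[e^{-J/\lambda}]\bigr)^2 \ge e^{-2\mathbb{E}[J]/\lambda}.
\]

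\textbf{Denominator.} By Assumption~\ref{ass:non_negative_cost} we have $J\ge 0$, so $e^{-2J/\lambda}\le 1$ pointwise. Hence $\mathbb{E}[e^{-2J/\lambda}]\le 1$. Combining the two bounds yields $\mathrm{ESS}\ge M e^{-2\mathbb{E}[J_m]/\lambda}$.

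This is precisely where the non-negativity shift of Assumption~\ref{ass:non_negative_cost} enters. It also mirrors the exponential factor $e^{2\mathbb{E}[J_m]/\lambda}$ appearing in $\rho_2$ in Lemma~\ref{lem:mc_error}, which is the intended connection.

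The main obstacle is making the ``infinite-sample'' replacement honest. Strictly, $\mathrm{ESS}$ is a random quantity and the limit only describes $\mathrm{ESS}/M$ asymptotically. I would state the corollary as a bound on the population quantity $M(\mathbb{E}\mathbf{w})^2/\mathbb{E}\mathbf{w}^2$, justified by the law of large numbers, which applies because the weights lie in $(0,1]$ and are therefore bounded with finite moments. The finite-sample version, as the text announces, is deferred to the subsequent Hoeffding-based corollary.

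A secondary subtlety is that the denominator bound $\mathbb{E}[e^{-2J/\lambda}]\le 1$ is crude. It does not depend on the shift $C$, whereas the right-hand side does, so the bound is shift-dependent and is sharpest when $J$ is shifted to have minimum zero. This is consistent with the assumption's recommended normalization.
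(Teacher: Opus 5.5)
Your proposal is correct and follows essentially the same route as the paper: reduce the ESS to $M\,\mathbb{E}[\mathbf{w}]^2/\mathbb{E}[\mathbf{w}^2]$, bound the denominator by $1$ using $J\ge 0$, and lower-bound the numerator by Jensen's inequality. The only cosmetic difference is that the paper bounds $\mathbb{E}[\mathbf{w}^2]\le\mathbb{E}[\mathbf{w}]\le 1$ via $\mathbf{w}^2\le\mathbf{w}$, whereas you use $e^{-2J/\lambda}\le 1$ directly; your caveats about the heuristic infinite-sample replacement and the shift-dependence of the bound are accurate points that the paper passes over with ``$\approx$''.
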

\begin{proof}
We can rewrite this in terms of sample expectations under infinite-sample: $ESS = M \frac{(\frac{1}{M}\sum \mathbf{w}_m)^2}{\frac{1}{M}\sum \mathbf{w}_m^2} \approx M \frac{\mathbb{E}[\mathbf{w}]^2}{\mathbb{E}[\mathbf{w}^2]}$. 
Because the costs are non-negative ($J_m \ge 0$), seen in Assumption~\ref{ass:non_negative_cost}, the weights are bounded $\mathbf{w}_i \in [0,1]$. Consequently, $\mathbf{w}_i^2 \le \mathbf{w}_i$, which implies that the second moment is bounded by the first moment: $\mathbb{E}[\mathbf{w}^2] \le \mathbb{E}[\mathbf{w}] \le 1$ (Assumption~\ref{ass:non_negative_cost}). 
Replacing the denominator with its upper bound of $1$, we obtain a conservative lower bound $ESS \ge M \mathbb{E}[\mathbf{w}]^2$.
By applying Jensen's inequality to the exponential weight function, we know that $\mathbb{E}[e^{-J_m/\lambda}] \ge e^{-\mathbb{E}[J_m]/\lambda}$ \cite{yoon2022samplingcomplexitypathintegral}. Squaring both sides and substituting this into our ESS inequality yields the lemma.
\end{proof}

In practice we have finitely many samples, so the bound above holds only approximately. The following corollary shows that it holds with high probability, with a concentration risk that is negligible
compared to the MC error risk $\rho_2$.
\begin{mycol}[ESS Lower Bound under Finite-sample]
\label{col:ess_bound}
Let the trajectory costs $J_1, \dots, J_M$ be independent and identically distributed, satisfying Assumption~\ref{ass:non_negative_cost}.
Then for any confidence level $\rho_{\mathrm{ess}} \in (0,1)$, with probability at least $1 - \rho_{\mathrm{ess}}$, the Effective Sample Size of the Boltzmann-weighted trajectories satisfies $\mathrm{ESS} \!\;\ge\;\! M\,e^{-\frac{2\mathbb{E}[J_m]}{\lambda}}$, 
where
\begin{equation}\label{eq:rho_ess}
    \rho_{\mathrm{ess}} \;\le\; \exp\!\left(-2M\left(e^{-\mathbb{E}[J_m]/\lambda} - e^{-2\mathbb{E}[J_m]/\lambda}\right)^2\right).
\end{equation}
\end{mycol}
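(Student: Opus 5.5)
The plan is to lift the deterministic reduction used in Corollary~\ref{col:ess_bound_no_limit} to the empirical setting by controlling the empirical mean of the unnormalized weights with Hoeffding's inequality. First write
\[
\mathrm{ESS}=M\,\frac{\bigl(\tfrac1M\sum_m \mathbf{w}_m\bigr)^2}{\tfrac1M\sum_m \mathbf{w}_m^2}.
\]
By Assumption~\ref{ass:non_negative_cost} we have $\mathbf{w}_m=e^{-J_m/\lambda}\in[0,1]$, so $\mathbf{w}_m^2\le \mathbf{w}_m\le 1$. The empirical second moment is therefore at most $1$ deterministically, for every sample realization, and no concentration is needed for the denominator. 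This gives the sample-wise bound $\mathrm{ESS}\ge M\,\bar{\mathbf{w}}^2$, where $\bar{\mathbf{w}}:=\frac1M\sum_m\mathbf{w}_m$.

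The remaining task is to show $\bar{\mathbf{w}}\ge e^{-\mathbb{E}[J_m]/\lambda}$ with high probability, since squaring then yields the claim. Set $\mu:=\mathbb{E}[\mathbf{w}]$ and $a:=e^{-\mathbb{E}[J_m]/\lambda}$. Jensen's inequality, exactly as in Corollary~\ref{col:ess_bound_no_limit}, gives $\mu\ge a$. The weights are i.i.d.\ and take values in $[0,1]$, so the one-sided Hoeffding inequality gives
\[
\Pr\bigl(\bar{\mathbf{w}}\le \mu-t\bigr)\le e^{-2Mt^2}.
\]

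The main obstacle is that the target threshold is $a$ rather than $\mu$, and the gap $\mu-a$ is only known to be nonnegative. A Hoeffding slack of $t=\mu-a$ could therefore be zero and give a trivial bound. To get a usable deviation, I will target the weaker threshold $a^2$ for $\bar{\mathbf{w}}$ and pay for it with the available slack $t=a-a^2=e^{-\mathbb{E}[J_m]/\lambda}-e^{-2\mathbb{E}[J_m]/\lambda}$, which is exactly the quantity appearing in \eqref{eq:rho_ess}. Since $\mu-t\ge a-t=a^2$, the event $\{\bar{\mathbf{w}}<a^2\}$ is contained in $\{\bar{\mathbf{w}}\le\mu-t\}$. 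Its probability is thus at most $\exp(-2M(a-a^2)^2)$, matching \eqref{eq:rho_ess}.

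On the complementary event, $\mathrm{ESS}\ge M\bar{\mathbf{w}}^2\ge M a^4$. This is weaker than the stated $Ma^2$. So I need to reconcile the threshold, most likely by reinterpreting the ESS lower bound via $\mathrm{ESS}\ge M\bar{\mathbf{w}}^2/\overline{\mathbf{w}^2}$. I would first try to sharpen the denominator using $\overline{\mathbf{w}^2}\le\bar{\mathbf{w}}$, which gives $\mathrm{ESS}\ge M\bar{\mathbf{w}}$. With this, $\bar{\mathbf{w}}\ge a^2$ directly gives $\mathrm{ESS}\ge M e^{-2\mathbb{E}[J_m]/\lambda}$, exactly the claim. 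This makes the argument consistent with the $a-a^2$ slack in \eqref{eq:rho_ess}. The final proof will therefore use the chain $\mathrm{ESS}\ge M\bar{\mathbf{w}}\ge Ma^2$ on the high-probability event, with failure probability bounded via Hoeffding as above.
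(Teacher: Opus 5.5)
Your final argument is correct and is essentially the paper's proof: the exact sample-wise bound $\mathrm{ESS}\ge\sum_m \mathbf{w}_m = M\bar{\mathbf{w}}$ (from $\mathbf{w}_m^2\le\mathbf{w}_m$), then one-sided Hoeffding, then the Jensen relaxation $\mathbb{E}[\mathbf{w}]\ge e^{-\mathbb{E}[J_m]/\lambda}$. The only cosmetic difference is that you fix the slack $t=a-a^2$ up front and use the inclusion $\{\bar{\mathbf{w}}<a^2\}\subseteq\{\bar{\mathbf{w}}\le\mu-t\}$, whereas the paper takes $t=\mu-a^2$ and relaxes the exponent afterwards; the initial detour through $\mathrm{ESS}\ge M\bar{\mathbf{w}}^2$ can simply be dropped.
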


\begin{proof}
Under Assumption~\ref{ass:non_negative_cost}, $J_m \ge 0$ implies $\mathbf{w}_m \in (0,1]$, hence $\mathbf{w}_m^2 \le \mathbf{w}_m$ for every $m$. Summing over all samples yields $\sum_{m} \mathbf{w}_m^2 \le \sum_{m} \mathbf{w}_m$, and therefore $\mathrm{ESS} \;\ge\; \sum_{m=1}^{M} \mathbf{w}_m.$
This inequality is exact and holds for every finite $M$ without any approximation.

Since the costs $J_1, \dots, J_M$ are i.i.d., the weights $\mathbf{w}_1, \dots, \mathbf{w}_M$ are also i.i.d.\ with $\mathbf{w}_m \in [0,1]$. By Hoeffding's inequality,
\begin{equation}\label{eq:hoeffding}
    \mathbb{P}\!\left(\sum_{m=1}^{M} \mathbf{w}_m < M\,\mathbb{E}[\mathbf{w}] - t\right) \;\le\; \exp\!\left(-\frac{2t^2}{M}\right).
\end{equation}
We wish to establish the event $\sum \mathbf{w}_m \ge M\,e^{-2\mathbb{E}[J_m]/\lambda}$, so we set
$t \;=\; M\!\left(\mathbb{E}[\mathbf{w}] - e^{-2\mathbb{E}[J_m]/\lambda}\right).$
By Jensen's inequality applied to the convex function $x \mapsto e^{-x/\lambda}$, we have $\mathbb{E}[\mathbf{w}] = \mathbb{E}\!\left[e^{-J_m/\lambda}\right] \ge e^{-\mathbb{E}[J_m]/\lambda},$
and since $\mathbb{E}[J_m] > 0$ implies $e^{-\mathbb{E}[J_m]/\lambda} > e^{-2\mathbb{E}[J_m]/\lambda}$, the offset $t$ is strictly positive. Substituting into~\eqref{eq:hoeffding}:
$$\mathbb{P}\!\left(\!\sum_{m=1}^{M}\!\! \mathbf{w}_m \!\!<\! M e^{-2\mathbb{E}[J_m]/\lambda}\!\!\right) \!\!\le\! \exp\!\left(\!\!-2M\!\!\left(\!\mathbb{E}[\mathbf{w}] \!\!-\! e^{-2\mathbb{E}[J_m]/\lambda}\!\right)^2\!\right)\!.$$
A  relaxation using $\mathbb{E}[\mathbf{w}] \ge e^{-\mathbb{E}[J_m]/\lambda}$ gives the explicit bound~\eqref{eq:rho_ess}.
Therefore, with probability at least $1 - \rho_{\mathrm{ess}}$, we have 
$\mathrm{ESS} \;{\ge}\; \sum_{m=1}^{M} \mathbf{w}_m \;\ge\; M\,e^{-2\mathbb{E}[J_m]/\lambda}$.
\end{proof}
\begin{remark}
The concentration risk $\rho_{\mathrm{ess}}$ decays exponentially in $M$ (cf.~\eqref{eq:rho_ess}), whereas the MC error risk $\rho_2$ in Lemma~\ref{lem:mc_error} decays only as $\mathcal{O}(M^{-1})$. Consequently, for any practical sample size, $\rho_{\mathrm{ess}} \ll \rho_2$, and the additional risk is negligible.
\end{remark}

\begin{mylem}
\label{lem:entropy_ess}
Given $M$ normalized importance weights $w_m = \mathbf{w}_m / \sum_{n} \mathbf{w}_n$, define the empirical Shannon entropy $H_M := -\sum_{m=1}^{M} w_m \log w_m \in [0, \log M]$. Then the Effective Sample Size satisfies
$$
\log(\mathrm{ESS}) \le H_M.
$$
\end{mylem}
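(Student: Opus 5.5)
The plan is to express $\mathrm{ESS}$ through the normalized weights and compare two members of the Rényi entropy family. Dividing numerator and denominator of the definition of $\mathrm{ESS}$ by $(\sum_n \mathbf{w}_n)^2$ gives
\[
\mathrm{ESS} = \frac{1}{\sum_{m=1}^{M} w_m^2},
\qquad\text{so}\qquad
\log(\mathrm{ESS}) = -\log \sum_{m=1}^{M} w_m^2 =: H_2(w).
\]
Here $H_2(w)$ is the Rényi entropy of order $2$ (the collision entropy) of the normalized weight vector $w$. The claim $\log(\mathrm{ESS}) \le H_M$ therefore reduces to the classical inequality $H_2(w) \le H_1(w)$, where $H_1 = H_M$ is the Shannon entropy.

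To prove $H_2 \le H_1$ directly, I would write
\[
\sum_m w_m^2 = \sum_m w_m \cdot w_m = \mathbb{E}_w[w],
\]
the expectation of the random variable $w_m$ when index $m$ is drawn with probability $w_m$. Since $x\mapsto -\log x$ is convex, Jensen's inequality gives
\[
-\log \mathbb{E}_w[w] \le \mathbb{E}_w[-\log w] = -\sum_m w_m \log w_m = H_M.
\]
The left-hand side is exactly $\log(\mathrm{ESS})$, which completes the argument. Equivalently, one can apply concavity of $\log$ in the form $\log \sum_m w_m w_m \ge \sum_m w_m \log w_m$ and negate both sides.

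The only step that needs some care is technical. Weights with $w_m = 0$ should be handled by the convention $0\log 0 = 0$: such indices carry zero probability mass under $w$, so both sides of the Jensen step are unaffected. I would also note when equality holds, namely for uniform weights on the support; this is consistent with $H_M \le \log M$ and $\mathrm{ESS} \le M$. No earlier result of the paper is needed beyond the definition of $\mathrm{ESS}$; in particular, Assumption~\ref{ass:non_negative_cost} plays no role here, since the inequality holds for any probability vector.
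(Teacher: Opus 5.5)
Your proof is correct, and its first half is the paper's: both rewrite $\mathrm{ESS}=1/\sum_m w_m^2$ and recognize $\log(\mathrm{ESS})$ as the R\'{e}nyi entropy of order $2$. Where you differ is in how you close the comparison with Shannon entropy.

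The paper invokes the general fact that the R\'{e}nyi entropy $H_\alpha$ is non-increasing in $\alpha$, justified only by a brief appeal to the log-sum inequality. It combines this with the limit $H_\alpha\to H_M$ as $\alpha\to 1$. You prove only the one comparison that is needed, $-\log\sum_m w_m\cdot w_m\le\sum_m w_m(-\log w_m)$, by a single Jensen step under the weight distribution itself. This is exactly the weighted AM--GM inequality $\sum_m w_m^2\ge\prod_m w_m^{w_m}=e^{-H_M}$.

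Your route is more elementary and self-contained: there is no limit in $\alpha$ and no cited monotonicity theorem. It also gives the equality case (uniform weights on the support) for free. The paper's route adds context rather than rigor. Placing $\log(\mathrm{ESS})$ in the R\'{e}nyi hierarchy immediately gives the whole chain $-\log\max_m w_m\le\log(\mathrm{ESS})\le H_M\le\log M$, although only the order-$2$ versus order-$1$ link is used in Theorem~\ref{thm:critical_entropy}.

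Your caveat about zero weights is harmless but moot here, since $\mathbf{w}_m=e^{-J_m/\lambda}>0$.
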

\begin{proof}
By definition, $\mathrm{ESS} = 1/\sum_m w_m^2$, so
$\log(\mathrm{ESS}) = -\log\!\bigl(\sum_{m=1}^M w_m^2\bigr)$.
Since $(w_m)$ is a probability vector, this is precisely the R\'{e}nyi entropy of order $2$:
$$
H_{\mathrm{R\acute{e}nyi}}(\alpha) = \frac{1}{1-\alpha}\log\!\Bigl(\sum_{m=1}^M w_m^\alpha\Bigr), \;\;\log(\mathrm{ESS}) = H_{\mathrm{R\acute{e}nyi}}(2).
$$
By the log-sum inequality, $H_{\mathrm{R\acute{e}nyi}}(\alpha)$ is monotonically non-increasing in $\alpha$ for any discrete probability vector, and $\lim_{\alpha\to 1} H_{\mathrm{R\acute{e}nyi}}(\alpha) = H_M$. Therefore $\log(\mathrm{ESS}) = H_{\mathrm{R\acute{e}nyi}}(2) \le H_{\mathrm{R\acute{e}nyi}}(1) = H_M$.
\end{proof}

\begin{mythm}[Critical Entropy Threshold]
\label{thm:critical_entropy}
Under Assumption~\ref{ass:non_negative_cost}, define the critical entropy threshold
$$H_c := \log\!\left( \frac{1+\sqrt{2}}{\rho_2 \epsilon_2^2} \right).$$
Then $H_M \ge H_c$ is a \emph{necessary} condition for the MC error bounds of Lemma~\ref{lem:mc_error} to hold at risk $\rho_2$ and additive precision $\epsilon_2$.
Equivalently, if $H_M < H_c$, the ESS is guaranteed to fall below the sampling complexity requirement, and the error bounds are necessarily violated.
\end{mythm}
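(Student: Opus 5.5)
The plan is to chain the three preceding results: Lemma~\ref{lem:mc_error} gives the sample-complexity requirement, Corollaries~\ref{col:ess_bound_no_limit}--\ref{col:ess_bound} turn it into an ESS requirement, and Lemma~\ref{lem:entropy_ess} turns that into an entropy requirement. Throughout I work in the regime where the ESS is governed by the bound $\mathrm{ESS}\ge M e^{-2\mathbb{E}[J_m]/\lambda}$. This bound holds exactly in the infinite-sample reduction, and with probability at least $1-\rho_{\mathrm{ess}}$ in the finite-sample case. The extra risk $\rho_{\mathrm{ess}}$ is absorbed, since it is negligible relative to $\rho_2$ by the preceding remark.

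\textbf{Step 1: express the requirement as an ESS condition.} For the bounds of Lemma~\ref{lem:mc_error} to hold at risk $\rho_2$ and precision $\epsilon_2$, solving the risk expression gives
$$M e^{-2\mathbb{E}[J_m]/\lambda} = \frac{1+\sqrt{2}}{\rho_2\epsilon_2^2}.$$
Read as a requirement, the quantity $M e^{-2\mathbb{E}[J_m]/\lambda}$ must be at least this value. This quantity is exactly the lower envelope for the ESS from Corollary~\ref{col:ess_bound_no_limit}. I will therefore identify it as the sampling-complexity requirement on the ESS, $\mathrm{ESS}\ge (1+\sqrt{2})/(\rho_2\epsilon_2^2)$. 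Taking logarithms, this reads $\log\mathrm{ESS}\ge H_c$.

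\textbf{Step 2: pass from ESS to entropy.} By Lemma~\ref{lem:entropy_ess}, $H_M\ge \log(\mathrm{ESS})$. So whenever the requirement $\log\mathrm{ESS}\ge H_c$ holds, we get $H_M\ge H_c$, which is the necessity claim. The equivalent form is the contrapositive: if $H_M<H_c$, then $\log\mathrm{ESS}\le H_M<H_c$. The ESS then lies strictly below the level needed to keep the risk at $\rho_2$, so the guarantee of Lemma~\ref{lem:mc_error} at $(\epsilon_2,\rho_2)$ cannot be certified.

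\textbf{Main obstacle.} The delicate point is Step 1. Lemma~\ref{lem:mc_error} is stated as a sufficient condition in terms of $e^{2\mathbb{E}[J_m]/\lambda}$, not in terms of the realized ESS. The corollaries only bound the ESS \emph{from below} by $Me^{-2\mathbb{E}[J_m]/\lambda}$. The step that turns the risk formula into a requirement on the ESS itself must therefore be argued with care. My first attempt will be to interpret $Me^{-2\mathbb{E}[J_m]/\lambda}$ as the effective number of samples certified by the sampling-complexity bound. Under this reading, "the bounds hold at risk $\rho_2$" means this certified ESS is at least $(1+\sqrt 2)/(\rho_2\epsilon_2^2)$, and Corollary~\ref{col:ess_bound} transfers it to the empirical ESS with high probability. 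I expect the resulting statement to be necessary relative to this certification, rather than an unconditional impossibility, and I would make this explicit.
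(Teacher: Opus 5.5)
Your proposal is correct and follows the paper's proof essentially step for step. You rearrange the risk formula of Lemma~\ref{lem:mc_error} to $Me^{-2\mathbb{E}[J_m]/\lambda}=(1+\sqrt{2})/(\rho_2\epsilon_2^2)$, use Corollary~\ref{col:ess_bound} to turn this into the requirement $\mathrm{ESS}\ge(1+\sqrt{2})/(\rho_2\epsilon_2^2)$, then apply Lemma~\ref{lem:entropy_ess} and take the contrapositive. The paper leaves the caveats in your ``main obstacle'' paragraph implicit: the extra $\rho_{\mathrm{ess}}$ risk, and that the result is necessity for the \emph{certified} bound rather than proof that the error actually exceeds $\epsilon_2$. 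Your reading is the accurate one, and the lower-bound direction of the ESS corollary is exactly what the chain needs, so no further argument is required.
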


\begin{proof}
From Lemma~\ref{lem:mc_error}, the risk probability satisfies
$\rho_2 = \frac{1+\sqrt{2}}{M\epsilon_2^2}\,e^{{2\mathbb{E}[J_m]}/{\lambda}}$.
Rearranging yields
$    M\,e^{-{2\mathbb{E}[J_m]}/{\lambda}} \;=\; \frac{1+\sqrt{2}}{\rho_2\,\epsilon_2^2}.$
By Corollary~\ref{col:ess_bound}, $\mathrm{ESS} \ge M\,e^{-{2\mathbb{E}[J_m]}/{\lambda}}$, so the error bounds require $\mathrm{ESS} \;\ge\; \frac{1+\sqrt{2}}{\rho_2\,\epsilon_2^2}$.
Now suppose $H_M < H_c$. By Lemma~\ref{lem:entropy_ess}, $\log(\mathrm{ESS}) \le H$, so
$$\log(\mathrm{ESS}) \;\le\; H_M \;<\; H_c \;=\; \log\!\left(\frac{1+\sqrt{2}}{\rho_2\,\epsilon_2^2}\right),$$
which cause a contradiction that $\mathrm{ESS} < \frac{1+\sqrt{2}}{\rho_2\,\epsilon_2^2}$.
Therefore $H_M \ge H_c$ is necessary for the error bounds to hold.
\end{proof}

Theorem~\ref{thm:critical_entropy} provides a principled value for the entropy floor. If $H_M$ drops below $H_c$, the ESS is too small to guarantee the MC error bounds in a probabilistic sense.
To enforce this constraint, we define a smooth barrier function $\Gamma(\cdot):(0,1]\to(0,1)$ that wraps any base cooling function $\Gamma_{\mathrm{base}}\in\mathcal{C}_\Gamma$:
$$\Gamma(\hat{H}) = \Gamma_{\mathrm{protect}} + \left[ \Gamma_{\mathrm{base}}(\hat{H}) - \Gamma_{\mathrm{protect}} \right] \cdot \sigma\!\left( \kappa (\hat{H} - \hat{H}_c) \right)$$
where $\sigma(x)=1/(1+\exp(-x))$ is the standard sigmoid, $\kappa>0$ controls the steepness of the transition, $\hat{H}=H_M/\log M$, 
$\hat{H}_c=H_c/\log M$ is the normalized critical threshold, and $\Gamma_{\mathrm{protect}}\in(0,1)$ is a conservative fallback decay rate.
When $\hat{H}_j \gg \hat{H}_c$, the sigmoid saturates at~$1$ and $\Gamma_j\approx \Gamma_{\mathrm{base}}(\hat{H}_j)$, recovering the original cooling behavior.
When $\hat{H}_j$ drops to or below $\hat{H}_c$, the sigmoid transitions smoothly toward~$0$ and $\Gamma_j\to \Gamma_{\mathrm{protect}}$, overriding the base schedule with a conservative decay rate. Crucially, $\Gamma_{\mathrm{protect}} \in (0,1)$ still satisfies $\Gamma_{\mathrm{protect}} < 1$, so the temperature continues to decrease and the asymptotic guarantee $\lambda_j \to 0$ of Theorem~\ref{thm:asymptotic_optimality} is preserved. The barrier does not halt cooling; rather, it decelerates the descent so that the system passes through the low-entropy regime gradually, allowing the sampling distribution to readjust before further temperature reduction.
The temperature and covariance are then updated multiplicatively as $\lambda_{j+1} = \Gamma_j \lambda_j, \quad \Sigma_{j+1} = \Gamma_j \Sigma_j$.

\section{Experiments and Analysis}
\label{sec:experiments}

\subsection{Empirical Validation of the Critical Entropy Threshold}
\begin{figure}[t]
  \centering
  \includegraphics[width=0.8\columnwidth]{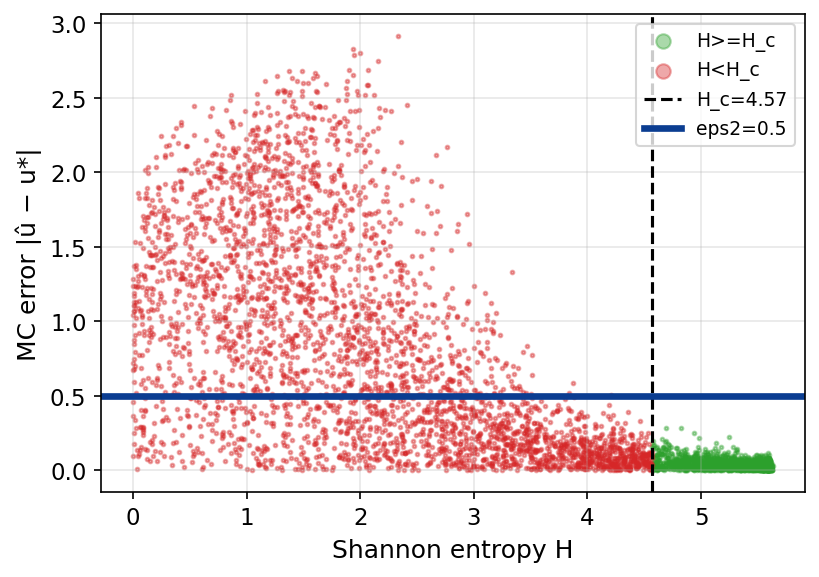}
\caption{MC estimation error versus the empirical Shannon weight entropy $H_M$, with the critical threshold $H_c$ and the error tolerance $\epsilon_2$ indicated by dashed/solid lines. Points are colored by whether $H_M \ge H_c$ or $H_M < H_c$. As guaranteed by Theorem~\ref{thm:critical_entropy}, the region $H_M < H_c$ exhibits frequent and severe violations of the $\epsilon_2$ tolerance, confirming that breaching the critical threshold leads to a breakdown of the MC error bounds. Interestingly, in the complementary region $H_M \ge H_c$, the empirical error also remains well-contained beneath the tolerance line. This result supports that the Shannon entropy serves as an effective real-time monitor for sampling quality.}

  \label{fig:hc_validation}
\end{figure}

Theorem~\ref{thm:critical_entropy} guarantees that the critical entropy threshold bounds the Monte Carlo (MC) estimation error in a probabilistic sense. To empirically validate this, we construct a one-dimensional importance-sampling setting (Fig.~\ref{fig:hc_validation}). 
Using $M=300$, $\epsilon_2=0.5$, and $\rho_2=10\%$, the theorem threshold is $H_c=\log\!\big((1+\sqrt{2})/(\rho_2\epsilon_2^2)\big)\approx 4.57 < \log M\approx 5.70$. 

Theorem~\ref{thm:critical_entropy} establishes that the Shannon entropy $H$ must satisfy $H \ge H_c(\rho_2, \varepsilon_2)$
to guarantee that the Monte Carlo error bounds of MPPI hold at risk $\rho_2$ and additive precision $\varepsilon_2$.  This threshold governs the anti-glassification barrier.

Choosing $H_c$ via the theorem ties the hyperparameter directly to the desired error tolerance. Tighter precision requires a higher entropy floor and therefore a more conservative cooling schedule.

Having validated the critical entropy threshold for the anti-glassification, we now evaluate the complete adaptive cooling schedule.
In all subsequent experiments, we instantiate the base cooling function as $\Gamma_{\mathrm{base}}(\hat{H})=\nu^2 (1-\hat{H})$, a valid member of $\mathcal{C}_\Gamma$ (cf.\ Remark~\ref{rem:fixed_vs_adaptive}), combined with the smooth barrier of Section~\ref{sec:anti_glass}.
We refer to this complete scheme as \textbf{ITAC MPPI}. For comparison, we also evaluate ITAC w.o. AG (our adaptive schedule without the anti-glassification barrier) and Vanilla MPPI (the standard baseline with a fixed cooling rate $\nu^2$).

\subsection{Case I: Reach-Avoid Tasks}
\label{sec:reach_avoid}
We first evaluate the adaptive cooling schedule on a large suite of
simple reach-avoid problems to assess its general applicability.
We randomly generate 50 scenarios, each with 1--5
circular obstacles. The objective is to reach the goal while avoiding all obstacles.
ITAC MPPI, ITAC without Anti-glassification, and vanila MPPI (\cite{williams2017model}) use identical initial temperature $\lambda_0$, base decay rate $\nu^2$, and sample size $M$; the only difference is the cooling rule.

\begin{table}[t]
\centering
\caption{Reach-avoid results over 50 randomly generated
scenarios. All methods use identical initial temperature $\lambda_0$, base decay $\nu^2$ and sample size $M$.}
\label{tab:reach_avoid}
\renewcommand{\arraystretch}{1.15}
\begin{tabular}{lcccc}
\toprule
Method & SR (\%) & Cost & Time (s) & Iters \\
\midrule
ITAC MPPI
  & \textbf{100.0}
  & \textbf{272.7\,$\pm$\,38.8}
  & \textbf{0.012\,$\pm$\,0.009}
  & \textbf{4.4\,$\pm$\,2.9} \\
  
ITAC w.o. AG
  & 100.0
  & 277.4\,$\pm$\,42.8
  & 0.012\,$\pm$\,0.009
  & 4.5\,$\pm$\,3.1 \\
Vanilla MPPI
  & 99.4
  & 322.2\,$\pm$\,68.3
  & 0.016\,$\pm$\,0.014
  & 5.7\,$\pm$\,4.8 \\
\bottomrule
\end{tabular}
\end{table}

Table~\ref{tab:reach_avoid} summarizes the results.
Both ITAC MPPI and ITAC w.o.\ AG achieve a perfect 100\% success rate across all 50 scenarios, compared to 99.4\% for Vanilla MPPI.
The full adaptive schedule produces the lowest cost (a $15.4\%$ reduction over Vanilla MPPI) with the smallest variance.
The ablation confirms that the entropy-driven cooling itself is the primary source of improvement, while the barrier contributes a further modest gain.
The result shows that a fixed decay rate that works well for one obstacle layout may be too aggressive or too conservative for another, whereas the entropy-driven cooling self-adjusts to the difficulty of each instance on standard planning problems.
We next stress-test it on significantly harder tasks involving non-smooth STL cost landscapes.

\vspace{-2pt}
\subsection{Case II: Harder Non-Smooth STL Tasks}
\vspace{-2pt}
We further evaluate ITAC MPPI and its ablation (ITAC w.o.\ AG) on three STL planning benchmarks of increasing complexity against the standard MPPI baseline (with fixed temperature schedule \cite{halder2025trajectoryplanningsignaltemporal}). The system is a 2D point-mass ($\mathbf{x}_{t+1} = A\mathbf{x}_t + B\mathbf{u}_t$, $dt=0.5$\,s) starting at $\mathbf{x}_0 = [1,1,0,0]^\top$. 
The tasks are: \textbf{Task A} (reach goal avoiding one obstacle); \textbf{Task B} (navigate past three obstacles, $N{=}3000$ rollouts); and \textbf{Task C} (pass through a narrow corridor gap before reaching a distant goal). We run 100 independent trials per task and report the STL success rate (SR), median iterations to convergence, and mean solve time. All methods use identical initial temperature $\lambda_0$, base decay $\nu^2$,
and sample size $M$. Table~\ref{tab:results} summarizes the performance across all tasks.
\vspace{0pt}

\begin{table}[t]
\centering
\caption{Comparative results (100 runs per method per task). SR: STL success rate. Med.\ Iters: median iterations to convergence. Time: mean solve time (s). Lower iters and lower time are better for equal SR.}
\label{tab:results}
\renewcommand{\arraystretch}{1.1}
\resizebox{\columnwidth}{!}{%
\begin{tabular}{llcccc}
\toprule
Task & Method & SR (\%) & Med.\ Iters & Time (s) & $\bar\rho$ \\
\midrule
\multirow{2}{*}{A}
  & ITAC MPPI & 100 & 38 & \textbf{2.22}          & 0.194 \\
  & ITAC w.o. AG        & 100          & 38          & 2.42          & 0.190 \\
  & Vanilla MPPI        & 100          & 50          & 4.03          & 0.192 \\
\midrule
\multirow{2}{*}{B}
  & ITAC MPPI & 100 & 13 & 0.83 & 0.559 \\
    & ITAC w.o. AG        & 100          & \textbf{12}          & \textbf{0.67}          & 0.566 \\
  & Vanilla MPPI        & 100          & 48          & 3.44          & 0.579 \\
\midrule
\multirow{2}{*}{C}
  & ITAC MPPI & \textbf{22}  & 201         & 15.28         & $-$0.037 \\
    & ITAC w.o. AG        & 18          & 201          & \textbf{10.88}          & $-$0.051 \\
  & Vanilla MPPI        & 16           & 201         & 12.30         & $-$0.048 \\
\bottomrule
\end{tabular}}
\end{table}

\begin{figure}[t]
  \centering
  \includegraphics[width=0.9\columnwidth]{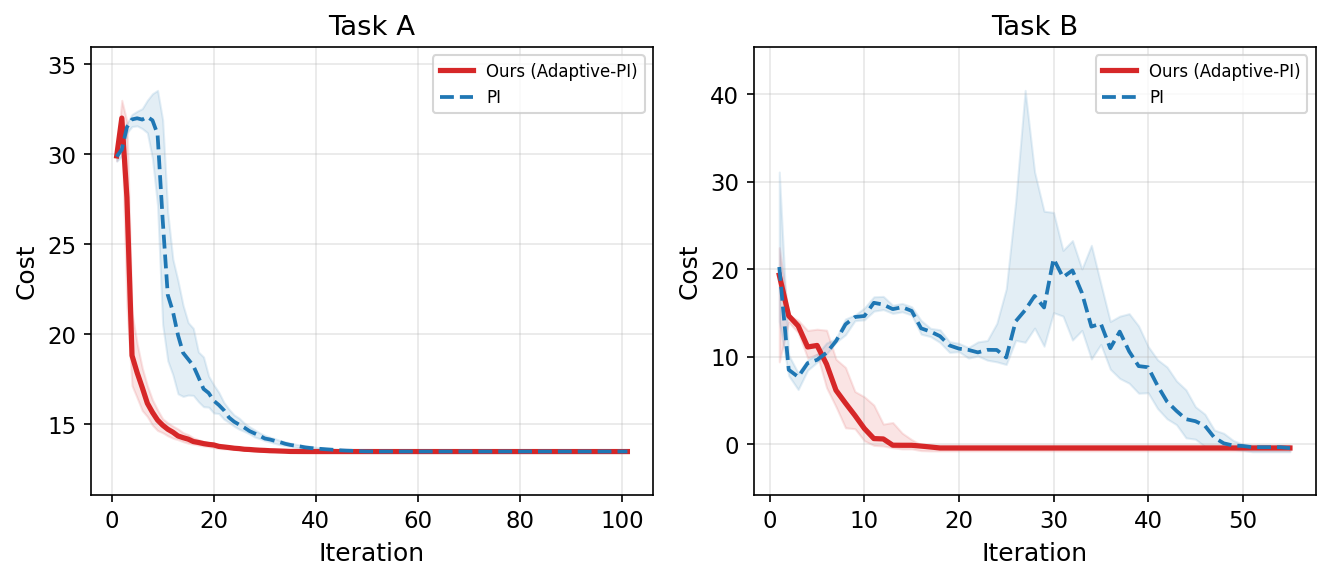}
  \caption{Cost convergence (median $\pm$ IQR, 100 runs) for all tasks. ITAC MPPI reaches the convergence at significantly fewer iterations with more stable performance.}
  \label{fig:conv}
\end{figure}
\vspace{0pt}
\begin{figure}[t]
  \centering
  \includegraphics[width=0.9\columnwidth]{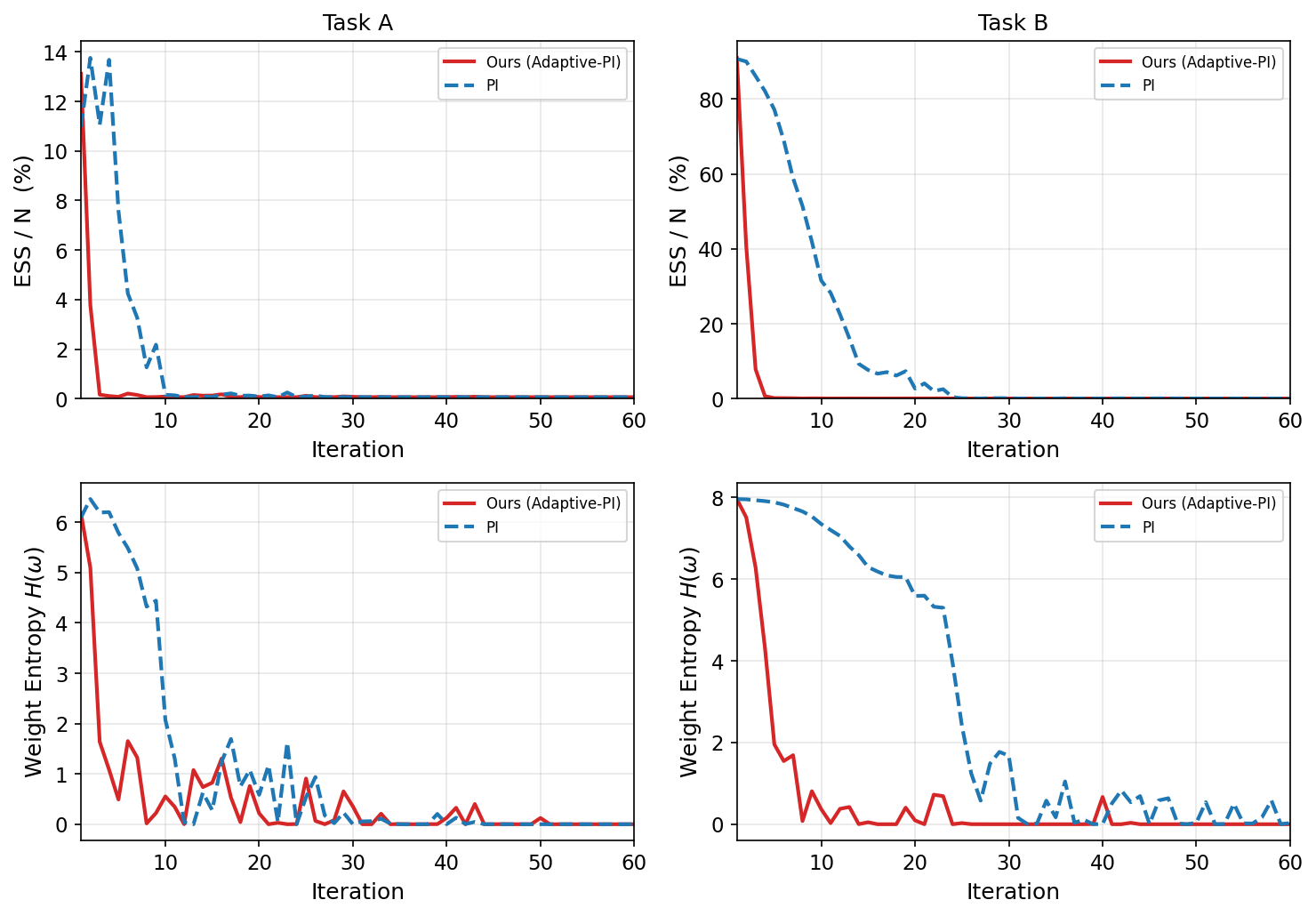}
  \caption{Weight entropy $H_M$ and ESS$/N$ vs.\ iteration.}
  \label{fig:entropy}
\end{figure}

While all methods achieve a 100\% success rate on simple STL Tasks~A and~B, ITAC substantially reduces the required iterations and computation time (by up to $4.14\times$). This improvement is mainly driven by the entropy-based cooling schedule, which adaptively lowers $\lambda$ as the sampling distribution concentrates. The ablation ITAC w.o.\ AG performs similarly on these two tasks, indicating that the acceleration primarily comes from entropy-driven cooling, while the anti-glassification mechanism offers only limited additional benefit in such simple settings.

Task~C (narrow corridor) is significantly more challenging, with all methods hitting the max iteration limit and exhibiting low success rates.
ITAC MPPI achieves 22\% SR compared to 18\% for ITAC w.o.\ AG and 16\% for Vanilla MPPI.
This result suggests that, in highly constrained environments, the anti-glassification mechanism can adaptively slow down the cooling process, thereby mitigating premature weight collapse and ultimately improving the probability of success.
 
Figure~\ref{fig:entropy} traces the weight distribution entropy $H$ and normalized ESS$/N$ for Tasks~A and~B. On Task~B, ITAC MPPI's ESS collapses from 91\% to below 0.1\% within 4 iterations, while Vanilla MPPI remains at 82\%. This rapid concentration of probability mass on high-quality trajectories directly explains the 73\% reduction in stopping iterations.
 
\vspace{-3pt}
\section{Conclusion} 
\vspace{-2pt}
This paper developed an information-theoretic adaptive cooling framework for deterministic MPPI, in which the temperature is regulated online using the Shannon entropy of the importance weights. The proposed mechanism provides guarantees on asymptotic convergence and yields a principled barrier against premature weight collapse. Experimental results on nonsmooth signal temporal logic tasks demonstrated improved sampling efficiency and faster convergence over existing baselines.

\bibliographystyle{IEEEtran}
\bibliography{ref}

@article{kappen2005linear,
  title={Linear theory for control of nonlinear stochastic systems},
  author={Kappen, Hilbert J.},
  journal={Physical Review Letters},
  volume={95},
  number={20},
  pages={200201},
  year={2005}
}

@article{theodorou2010generalized,
  title={A generalized path integral control approach to reinforcement learning},
  author={Theodorou, Evangelos and Buchli, Jonas and Schaal, Stefan},
  journal={Journal of Machine Learning Research},
  volume={11},
  pages={3137--3181},
  year={2010}
}

@article{kazim2024recent,
  title={Recent advances in path integral control for trajectory optimization: An overview in theoretical and algorithmic perspectives},
  author={Kazim, Muhammad and Hong, JunGee and Kim, Min-Gyeom and Kim, Kwang-Ki K.},
  journal={Annual Reviews in Control},
  volume={57},
  pages={100931},
  year={2024}
}

@inproceedings{yi2024covompc,
  title={{CoVO}-{MPC}: Theoretical analysis of sampling-based {MPC} and optimal covariance design},
  author={Yi, Zeji and Pan, Chaoyi and He, Guanqi and Qu, Guannan and Shi, Guanya},
  booktitle={6th Annual Learning for Dynamics \& Control Conference},
  pages={1122--1135},
  year={2024}
}

@inproceedings{asmar2023model,
  title={Model Predictive Optimized Path Integral Strategies},
  author={Asmar, Dylan M. and Senanayake, Ransalu and Manuel, Shawn and Kochenderfer, Mykel J.},
  booktitle={IEEE International Conference on Robotics and Automation (ICRA)},
  pages={3182--3188},
  year={2023}
}

@article{trevisan2024biased,
  title={Biased-{MPPI}: Informing sampling-based model predictive control by fusing ancillary controllers},
  author={Trevisan, Elia and Alonso-Mora, Javier},
  journal={IEEE Robotics and Automation Letters},
  volume={9},
  number={6},
  pages={5871--5878},
  year={2024}
}

@article{kappen2016adaptive,
  title={Adaptive importance sampling for control and inference},
  author={Kappen, Hilbert J. and Ruiz, Hans Christian},
  journal={Journal of Statistical Physics},
  volume={162},
  number={5},
  pages={1244--1266},
  year={2016}
}

@article{pezzato2025sampling,
  title={Sampling-based model predictive control leveraging parallelizable physics simulations},
  author={Pezzato, Corrado and Salmi, Chadi and Trevisan, Elia and Spahn, Max and Alonso-Mora, Javier and Hern\'{a}ndez Corbato, Carlos},
  journal={IEEE Robotics and Automation Letters},
  volume={10},
  number={3},
  pages={2750--2757},
  year={2025}
}

@article{fazlyab2026preconditioned,
  title={Model Predictive Path Integral Control as Preconditioned Gradient Descent},
  author={Fazlyab, Mahyar and Sharifi, Sina and Wang, Jiarui},
  journal={arXiv preprint arXiv:2603.24489},
  year={2026}
}

@inproceedings{watson2023inferring,
  title={Inferring smooth control: Monte carlo posterior policy iteration with gaussian processes},
  author={Watson, Joe and Peters, Jan},
  booktitle={Conference on Robot Learning},
  pages={67--79},
  year={2023}
}

@article{schramm2026variancereduced,
  title={Variance-Reduced Model Predictive Path Integral via Quadratic Model Approximation},
  author={Schramm, Fabian and Tiofack, Franki Nguimatsia and Perrin-Gilbert, Nicolas and Toussaint, Marc and Carpentier, Justin},
  journal={arXiv:2602.03639},
  year={2026}
}

@article{zheng2026signal,
  title={Signal Temporal Logic guided Stein Variational Path Integral Optimization},
  author={Zheng, Hongrui and Zang, Zirui and Amine, Ahmad and Vasile, Cristian Ioan and Mangharam, Rahul},
  journal={arXiv:2603.13333},
  year={2026}
}

@inproceedings{halder2025trajectoryplanningsignaltemporal,
  title={Trajectory Planning with Signal Temporal Logic Costs using Deterministic Path Integral Optimization},
  author={Halder, Patrick and Homburger, Hannes and Kiltz, Lothar and Reuter, Johannes and Althoff, Matthias},
  booktitle={IEEE International Conference on Robotics and Automation (ICRA)},
  year={2025}
}

@inproceedings{yoon2022samplingcomplexitypathintegral,
  title={Sampling Complexity of Path Integral Methods for Trajectory Optimization},
  author={Yoon, Hyung-Jin and Tao, Chuyuan and Kim, Hunmin and Hovakimyan, Naira and Voulgaris, Petros},
  booktitle={American Control Conference (ACC)},
  pages={4685--4690},
  year={2022}
}

@article{williams2017model,
  title={Model predictive path integral control: From theory to parallel computation},
  author={Williams, Grady and Aldrich, Andrew and Theodorou, Evangelos A},
  journal={Journal of Guidance, Control, and Dynamics},
  volume={40},
  number={2},
  pages={344--357},
  year={2017}
}

@article{homburger2025optimality,
  title={Optimality and suboptimality of MPPI control in stochastic and deterministic settings},
  author={Homburger, Hannes and Messerer, Florian and Diehl, Moritz and Reuter, Johannes},
  journal={IEEE Control Systems Letters},
  volume={9},
  pages={763--768},
  year={2025}
}

@article{kirkpatrick1983optimization,
  title={Optimization by simulated annealing},
  author={Kirkpatrick, Scott and Gelatt, C Daniel and Vecchi, Mario P},
  journal={Science},
  volume={220},
  number={4598},
  pages={671--680},
  year={1983}
}

@article{williamsInformationTheoreticModel2017a,
  title = {Information Theoretic Model Predictive Control: Theory and Applications to Autonomous Driving},
  author = {Williams, Grady and Drews, Paul and Goldfain, Brian and Rehg, James M. and Theodorou, Evangelos A.},
  journal = {IEEE Transactions on Robotics},
  volume = {34},
  number = {5},
  pages = {1141--1151},
  year = {2018}
}


\end{document}